\makeatletter\@addtoreset {equation}{section}\makeatother
\newtheorem{theorem}{Theorem}
\newtheorem{lemma}{Lemma}
\newtheorem{remark}{Remark}
\newtheorem{corollary}{Corollary}
\newtheorem{proposition}{Proposition}
\def\R{\mathbb{R}}
\newenvironment{proof}{
    \noindent {\it Proof.}}{\hfill$\Box$
}
\begin{document}

\title{\bf Rigorous justification of the short-pulse equation}

\author{Dmitry Pelinovsky$^1$ and Guido Schneider$^2$ \\
{\small $^1$ Department of Mathematics and Statistics, McMaster University} \\ {\small
1280 Main Street West, Hamilton, Ontario, Canada, L8S 4K1} \\
{\small $^2$ Institut f\"{u}r Analysis, Dynamik und Modellierung,
Universit\"{a}t Stuttgart} \\ {\small Pfaffenwaldring 57, D-70569
Stuttgart, Germany } }

\date{\today}
\maketitle

\begin{abstract}
We prove that the short-pulse equation, which is derived from a
quasilinear Klein--Gordon equation with formal asymptotic methods, can be rigorously
justified. The justification procedure applies to small-norm
solutions of the short-pulse equation. Although the small-norm solutions
exist for infinite times and include modulated pulses and their elastic interactions,
the error bound for arbitrary initial data can only be controlled over finite time intervals.
\end{abstract}

\section{Introduction}

Short pulses play an important role in nonlinear optics \cite{Vladimirov,Schafer},
nonlinear meta-materials \cite{Kevrekidis}, and mode-locked lasers \cite{Kutz}. The classical envelope equations
such as the nonlinear Schr\"{o}dinger equation are no longer valid as the pulse width is
only few carrier wavelengths, instead of thousands of these. Short-pulse approximations
have been derived in this context by using geometric optics \cite{Y1,Y2},
diffractive nonlinear optics \cite{AR02,AR03},
nonlocal envelope equations with full dispersion \cite{BL02,CGL05},  and
a regularized nonlinear Schr\"{o}dinger equation \cite{CL09}. These models have been
rigorously justified similarly to the justification procedure of the classical
nonlinear Schr\"{o}dinger equation \cite{Kal,Sch}. Under the term ``rigorous justification",
we understand that the error between solutions of the original and approximated equations
is controlled in some norm over sufficiently long time intervals.

A different model for short pulses with few cycles on the pulse width
was derived by Sch\"afer \& Wayne \cite{SW04}. We term this model as
{\em the short-pulse equation} and write it in the form,
\begin{equation}
A_{\xi \tau} = A + (A^{3})_{\xi \xi},
\label{short-pulse}
\end{equation}
where $\tau \in \R_+$ is the evolution time, $\xi \in \R$ is the spatial coordinate, and
$A(\tau,\xi) \in \mathbb{R}$ is the amplitude function. We emphasize that
this short-pulse equation is different from all short-pulse approximations
used earlier. It is dispersive compared to the geometric optics \cite{Y1,Y2}, it is
quasilinear compared to the diffractive nonlinear optics \cite{AR02,AR03},
and it is not an envelope equation for nearly harmonic linear waves compared
to the nonlocal and regularized nonlinear Schr\"{o}dinger equations \cite{BL02,CL09,CGL05}.
Although the short-pulse equation (\ref{short-pulse}) is a one-dimensional model, it can be generalized
to the two- and three-dimensional geometries, for the price of losing all the nice properties
of this short-pulse equation listed next.

The short-pulse equation (\ref{short-pulse}) represents the class of
nonlinear wave equations with low-frequency dispersion, which reduce
in the dispersionless limit to the inviscid Burgers equation. Local
well-posedness of the short-pulse equation was established in $H^s(\R)$
for $s > \frac{3}{2}$ \cite{SW04,SSK10}. Using a hierarchy of
conserved quantities of the short-pulse equation \cite{Br05},
solutions with small $H^2$ norm were extended globally for infinite time \cite{PS10}.
On the other hand, solutions with large $H^2$ norm were proved to blow up in a finite time
\cite{LPS09}. The blow-up behavior
resembles wave breaking when the amplitude $A$ remains bounded but the slope
steepens up, similar to the self-steeping behavior of the inviscid Burger equation.

Sakovich \& Sakovich found that the short-pulse equation (\ref{short-pulse}) is integrable
by means of the inverse scattering transform \cite{SS05}. By a coordinate
transformation, this equation is reduced to
the sine--Gordon equation in characteristic coordinates, which admits
exact modulated pulse (breather) solutions \cite{SS06}.
Multi-pulse solutions exhibiting elastic scattering as well as periodic wave solutions of the short-pulse equation
were later found by Matsuno \cite{Matsuno1,Matsuno}.

It is the purpose of this article to justify the applicability of the short-pulse
equation (\ref{short-pulse}) to dynamics of pulses in the framework of the quasilinear Klein--Gordon
equation,
\begin{equation}
\label{Maxwell}
u_{tt} - u_{xx} + u + (u^3)_{xx} = 0,
\end{equation}
where $t \in \R_+$, $x \in \R$, and $u(t,x) \in \R$. Compared to the full
system of Maxwell equations in electromagnetic
theory \cite{SW04}, the quasilinear Klein--Gordon equation (\ref{Maxwell}) is only the toy model.
It has been used before to construct the breather solutions on a finite spatial
scale by means of the spatial dynamics methods \cite{GS05}.

Regarding justifications of the short-pulse equation, Chung {\em et al.} \cite{CJSW05}
developed the justification analysis for the linear version
of the short-pulse equation by working with oscillatory integrals and roots
of the dispersion relations in a more complicated system of Maxwell equations.
They also illustrated numerically that the nonlinear version of the short-pulse
equation, derived heuristically with a formal renormalization procedure, yields a very
good approximation of the modulated pulse solutions in the limit of few cycles
on the pulse width. However, the problem of justification of the nonlinear
short-pulse equation (\ref{short-pulse}) remained opened up to the date.
We emphasize that the justification analysis, albeit similar to other short-pulse
approximations, does not follow immediately from earlier literature on the subject.

To develop the nonlinear justification analysis of the short-pulse equation (\ref{short-pulse}), we use
the local existence theory and apriori energy estimates. Dealing with the energy estimates,
it is difficult to control the solutions of the quasilinear Klein--Gordon equation (\ref{Maxwell})
in Sobolev spaces $H^s(\R)$ with higher index $s > 0$ using the scaled variables of
the short-pulse equation (\ref{short-pulse}). These norms diverge as $\epsilon \to 0$,
the higher is the index, the faster is the divergence. To avoid this difficulty,
we shall implement the coordinate transformation from the beginning and work
with the error term in the scaled variables. Specifically, we use the transformation of
variables,
\begin{equation}
\label{new-variables}
u(t,x) = 2 \epsilon U(\tau,\xi), \quad \tau = \epsilon t, \quad \xi = \frac{x - t}{2 \epsilon},
\end{equation}
and rewrite the quasilinear Klein--Gordon equation (\ref{Maxwell}) in the equivalent form,
\begin{equation}
\label{Maxwell-short}
U_{\tau \xi} = U + (U^3)_{\xi \xi} + \epsilon^2 U_{\tau \tau}.
\end{equation}
The short-pulse equation (\ref{short-pulse}) appears from equation
(\ref{Maxwell-short}) by neglecting the last term $\epsilon^2 U_{\tau \tau}$.
It is the starting point of our analysis. The following theorem
presents the main result.

\begin{theorem}
For all  $s > \frac{7}{2}$ and $T > 0$, there exists a $\delta_0 > 0$ such
that for all $\delta \in (0,\delta_0)$, there exist $ \epsilon_0 > 0 $ and $ C_0 > 0 $
such that for all $ \epsilon \in (0,\epsilon_0) $, the following holds.
Let $A \in C([0,T],H^s(\R))$
be a local solution of the short-pulse equation (\ref{short-pulse}) such
that
\begin{eqnarray}
\nonumber
& \phantom{t} & \sup_{\tau \in [0,T]} \| A(\tau,\cdot) \|_{H^s} + \sup_{\tau
\in [0,T]} \| A_{\tau}(\tau,\cdot) \|_{H^{s-1}}
+ \sup_{\tau \in [0,T]} \| A_{\tau \tau}(\tau,\cdot) \|_{H^{s-2}} \\
\label{condition-solution}
&  \phantom{t} & \phantom{texttexttexttexttexttexttexttexttext}
+ \sup_{\tau \in [0,T]} \| A_{\tau \tau \tau}(\tau,\cdot) \|_{H^{s-3}} \leq
\delta
\end{eqnarray}
and let $U_0 \in H^3(\R)$ and $V_0 \in H^2(\R)$ be such that
\begin{equation}
\label{error-initial}
\left\| U_0 - A(0,\cdot) \right\|_{H^2} +
\left\| V_0 - A_{\tau}(0,\cdot) \right\|_{H^1} \leq \epsilon.
\end{equation}
Then there exists a unique solution
\begin{eqnarray*}
U \in C([0,T],H^2(\R)) \cap C^1([0,T],H^1(\R)) \cap C^2([0,T],L^2(\R))
\end{eqnarray*}
of the quasilinear Klein--Gordon equation (\ref{Maxwell-short}) subject to the initial data
$U(0,\cdot) = U_0$, $U_{\tau}(0,\cdot) = V_0$ satisfying
\begin{equation}
\label{error-final}
\sup_{\tau \in [0,T]} \left\| U(\tau,\cdot) - A(\tau,\cdot) \right\|_{H^2}
\leq C_0 \epsilon.
\end{equation}
\label{theorem-main}
\end{theorem}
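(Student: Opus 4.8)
The plan is to control the difference $W := U - A$ by a single energy estimate that is uniform in $\epsilon$. Inserting $U = A + W$ into (\ref{Maxwell-short}) and subtracting the short-pulse equation (\ref{short-pulse}) satisfied by $A$, all purely-$A$ terms cancel and $W$ solves
\[
\epsilon^2 W_{\tau\tau} - W_{\tau\xi} + W + \left( 3 A^2 W + 3 A W^2 + W^3 \right)_{\xi\xi} + \epsilon^2 A_{\tau\tau} = 0 ,
\]
whose only forcing is the residual $R := \epsilon^2 A_{\tau\tau}$, formally of order $\epsilon^2$ and bounded, together with its time derivative $R_\tau = \epsilon^2 A_{\tau\tau\tau}$, by the hypotheses (\ref{condition-solution}). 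By (\ref{error-initial}) the initial error satisfies $\|W(0,\cdot)\|_{H^2} + \|W_\tau(0,\cdot)\|_{H^1} \le 2\epsilon$, so it suffices to propagate $\|W\|_{H^2} = O(\epsilon)$ on $[0,T]$.

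The core of the argument is an energy functional $\mathcal E(\tau)$ that controls $\|W\|_{H^2}^2$ together with the weighted quantity $\epsilon^2 \|W_\tau\|_{H^1}^2$, written entirely in the scaled variables so as to bypass the $\epsilon \to 0$ blow-up of the unscaled Sobolev norms, and modeled on the natural energy of the quasilinear wave equation $u_{tt} = (1 - 3 u^2) u_{xx} - 6 u u_x^2 - u$ underlying (\ref{Maxwell}). Its coercivity hinges on the sign condition $1 - 12 \epsilon^2 U^2 > 0$, which holds because $U = A + W$ is uniformly small once $\delta$ and $\epsilon$ are small. Differentiating $\mathcal E$ along the error equation, integrating the mixed term $-W_{\tau\xi}$ and the quasilinear term $(3 A^2 W)_{\xi\xi}$ by parts, and treating the remaining nonlinearity $3 A W^2 + W^3$ perturbatively, one collects coefficients built from $A, A_\xi, A_{\xi\xi}, A_\tau$; these are bounded in $L^\infty$ exactly when $s > \frac{7}{2}$, and are absorbed into $C \mathcal E$ with $C$ independent of $\epsilon$. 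The delicate term is the residual: pairing $\epsilon^2 A_{\tau\tau}$ against $W_\tau$ is dangerous because the energy controls $W_\tau$ only with an $\epsilon$-weight, so I would integrate by parts in $\tau$ to move the derivative onto $A$, replacing $A_{\tau\tau}$ paired with $W_\tau$ by $A_{\tau\tau\tau}$ paired with $W$. This is precisely what forces the bound on $A_{\tau\tau\tau} \in H^{s-3}$ in (\ref{condition-solution}), and the threshold $s > \frac{7}{2}$ is exactly $3 + \frac{1}{2}$, guaranteeing $A_{\tau\tau\tau} \in L^\infty$ through $H^{s-3} \hookrightarrow L^\infty$. The outcome is a differential inequality $\frac{d}{d\tau} \mathcal E \le C \mathcal E + C \epsilon^2$ with $\mathcal E(0) = O(\epsilon^2)$ and $C$ uniform in $\epsilon$, whence $\mathcal E(\tau) \le C_0^2 \epsilon^2$ on $[0,T]$ by Gronwall's inequality.

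To turn this a priori estimate into the theorem, I would first invoke the quasilinear local existence theory to obtain a unique maximal solution $U$ of (\ref{Maxwell-short}) in the stated class on some interval $[0,T_\epsilon)$, and then run a continuity (bootstrap) argument: on the maximal subinterval of $[0,T] \cap [0,T_\epsilon)$ where $\|W\|_{H^2} \le 2 C_0 \epsilon$, the smallness of $U$ keeps the hyperbolicity/sign condition valid, the energy estimate applies, and it improves the bound to $\|W\|_{H^2} \le C_0 \epsilon$. This strict improvement both prevents blow-up, so that $T_\epsilon > T$, and prevents the norm from reaching the barrier, so the subinterval is all of $[0,T]$ and (\ref{error-final}) follows.

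I expect the decisive obstacle to be the uniformity in $\epsilon$ of the energy estimate. The term $\epsilon^2 W_{\tau\tau}$ is a singular perturbation that injects a fast characteristic of speed $O(\epsilon^{-2})$, namely the acoustic mode discarded by the short-pulse reduction, so the energy must be $\epsilon$-weighted to keep this mode bounded while still controlling $\|W\|_{H^2}$, and every integration by parts must be checked to produce no uncompensated negative power of $\epsilon$. Ensuring that the residual contributes at order $\epsilon^2$ rather than $\epsilon$, via the $\tau$-integration by parts that consumes the $A_{\tau\tau\tau}$ bound, is the step on which the whole estimate turns.
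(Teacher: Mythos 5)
Your overall architecture --- decompose $U = A + W$ (the paper uses the rescaled error $U = A + \epsilon R$, which is essentially cosmetic), derive the error equation with residual $\epsilon^2 A_{\tau\tau}$, run an $\epsilon$-weighted energy estimate entirely in the scaled variables, and close with Gronwall plus a continuation argument based on the quasilinear local existence theory --- is exactly the paper's. But the proposal leaves the one genuinely hard step unexecuted, and the hints you give for it would not work as stated. The energy cannot be ``modeled on the natural energy of the quasilinear wave equation'': in the scaled variables the principal part is the degenerate operator $\epsilon^2\partial_\tau^2 - \partial_\tau\partial_\xi$, and if you test the $\xi$-differentiated error equation with $W_{\tau\xi}$ to produce your proposed $\epsilon^2\|W_\tau\|_{H^1}^2$ term, the quasilinear term contributes $-\tfrac32 A^2 W_{\xi\xi}^2$ to the energy density with \emph{no} compensating $+W_{\xi\xi}^2$, so $\|W\|_{H^2}$ is not controlled; one is forced to use a second family of multipliers. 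The paper tests with $R_\xi$, $R_\tau$, then $R_{\xi\xi}$ against $\partial_\xi$ of the equation and $R_{\tau\tau}$ against $\partial_\tau$ of the equation, obtaining an energy that contains $\epsilon^4\|R_{\tau\tau}\|_{L^2}^2$ rather than $\epsilon^2\|R_{\tau\xi}\|_{L^2}^2$ --- the mixed derivative $R_{\xi\tau}$ is recovered from the equation itself via (\ref{bound-R-xi-tau}), not from the energy --- together with a correction $\tilde E$ containing indefinite cross terms such as $-3A^2R_\xi^2$ and $-2\epsilon^2 R_{\xi\xi}R_{\xi\tau}$. This is also where the smallness $\delta_0$ genuinely enters: coercivity of $E+\tilde E$ requires $1 - 3(A+\epsilon R)^2 > 0$, i.e.\ smallness of $U$ itself, not your condition $1 - 12\epsilon^2 U^2 > 0$, which is the hyperbolicity condition of the unscaled equation and holds automatically once $\epsilon$ is small.

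Two of your ``decisive'' diagnoses are also misplaced. Pairing the residual with $W_\tau$ is not dangerous and needs no integration by parts in $\tau$: in the paper's normalization it produces the term $\delta E^{1/2}$ in (\ref{balance-2}), which is absorbed by the elementary bound $2E^{1/2}\le 1+E$ before Gronwall. The hypothesis on $A_{\tau\tau\tau}$ is consumed instead by the fourth balance equation ($\partial_\tau$ of the error equation tested against $R_{\tau\tau}$), where the differentiated residual $\epsilon A_{\tau\tau\tau}$ is paired with $\epsilon^2 R_{\tau\tau}$ in $L^2$; and the threshold $s>\tfrac72$ is what makes $A_{\tau\tau\tau}$ well defined in $H^{s-3}$ via the Banach algebra property in Corollary \ref{corollary-third-antider}, not an $L^\infty$ embedding used inside the estimate. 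None of this invalidates your plan in principle --- a $\tau$-integration by parts is a legitimate alternative treatment of the residual --- but as written the proof has a hole exactly where the work is: the explicit energy functional, the derivation of its balance law, and the verification that every cross term it generates is controlled uniformly in $\epsilon$.
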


\begin{remark}
Condition (\ref{condition-solution}) can be satisfied from constraints on the initial data
of the short-pulse equation (\ref{short-pulse}), see Corollaries \ref{corollary-first-antider},
\ref{corollary-second-antider}, and \ref{corollary-third-antider}. Loosely speaking,
these constraints are satisfied when the first
three anti-derivatives of $A(0,\xi)$ in $\xi$ are square integrable.
Note that these constraints are only imposed on the initial condition $A(0,\xi)$
of the short-pulse equation (\ref{short-pulse}), the initial condition of the
quasilinear Klein--Gordon equation (\ref{Maxwell-short}) is arbitrary within
the proximity bound (\ref{error-initial}).
\end{remark}

\begin{remark}
For semilinear hyperbolic systems, Alterman \& Rauch \cite{AR03}
derived and justified the semilinear short-pulse equation
without requiring any constraints on  the initial data. They introduced
an $\epsilon$-dependent cut-off function in Fourier space removing the
Fourier modes close to the zero wave numbers. However, our original
system and our short-pulse equaiton are quasilinear and most of the subsequent
technical difficulties in our analysis are due to this fact. For the same reason,
it is not clear if the constraints on the initial data of the short-pulse
equation (\ref{short-pulse}) can be removed by using a similar cut-off function.
The answer to this question will be a subject of further studies.
\end{remark}

\begin{remark}
In terms of the variables of the original equation (\ref{Maxwell}), we can rewrite
bounds (\ref{error-initial}) and (\ref{error-final}) in the equivalent form.
If we assume that, for sufficiently small $\epsilon > 0$,
 there is $C > 0$ such that the initial data satisfy
\begin{equation}
\label{error-initial-unscaled}
\left\| u(0,\cdot) - 2 \epsilon A\left(0,\frac{\cdot}{2 \epsilon}\right) \right\|_{H^2} \leq C \epsilon^{1/2},\quad
\left\| u_t(0,\cdot) + A_{\xi}\left(0,\frac{\cdot}{2 \epsilon}\right) \right\|_{H^1} \leq C \epsilon^{1/2},
\end{equation}
then there exists a $T$-dependent but $\epsilon$-independent positive constant $C_0$ such that
the solution of the quasilinear Klein--Gordon equation (\ref{Maxwell}) satisfies
\begin{equation}
\label{error-final-unscaled}
\sup_{t \in [0,T/\epsilon]} \left\| u(t,\cdot) - 2 \epsilon A\left(\epsilon t,\frac{\cdot - t}{2 \epsilon}\right) \right\|_{H^2} \leq C_0 \epsilon^{1/2}.
\end{equation}
If $A_0 \in H^2(\R)$ and $\epsilon \to 0$, then
\begin{equation}
\left\| \epsilon A_0 \left(\frac{\cdot}{2 \epsilon}\right) \right\|_{H^2} = \mathcal{O}(\epsilon^{-1/2}), \quad
\left\| A_0'\left(\frac{\cdot}{2 \epsilon}\right) \right\|_{H^1} = \mathcal{O}(\epsilon^{-1/2}).\label{leading}
\end{equation}
Bounds (\ref{error-initial-unscaled}), (\ref{error-final-unscaled}), and (\ref{leading}) show that the error
terms between solutions of the short-pulse equation and the quasilinear Klein--Gordon equation
in the original variables are also $\mathcal{O}(\epsilon)$ smaller than the leading order terms.
\end{remark}

{\bf Organization of the paper:} Section 2 presents results about local and global solutions
of the short-pulse equation (\ref{short-pulse}). In particular, we identify the constraints
on the initial conditions $A(0,\xi)$, which verify the validity of the assumption (\ref{condition-solution})
in Theorem \ref{theorem-main}. Section 3 deals with local solutions
of the quasilinear Klein--Gordon equations (\ref{Maxwell}) and (\ref{Maxwell-short}).
We derive the continuation criterion for local solutions of the quasilinear equations,
which is useful to extend local solutions to the times $t = \mathcal{O}(1/\epsilon)$
or $T = \mathcal{O}(1)$ required for the justification result in Theorem \ref{theorem-main}.
Section 4 reports apriori energy estimates for the error term between solutions of the
short-pulse and quasilinear Klein--Gordon equations. The justification result
of Theorem \ref{theorem-main} is proven in Section 5 by using a continuation
argument together with the energy estimates.

{\bf Notations:} $H^s(\R)$ for $s \geq 0$ denotes the Hilbert--Sobolev space equipped with
the norm
$$
\| f \|_{H^s} = \left( \int_{\R} (1 + k^2)^{s} |\hat{f}(k)|^2 dk \right)^{1/2},
$$
where $\hat{f}$ is the Fourier transform of $f$.
We shall intersect these spaces with $\dot{H}^{-m}$, $m \in \mathbb{N}$, equipped with the norm
$$
\| f \|_{\dot{H}^{-m}} = \left( \int_{\R} k^{-2m} |\hat{f}(k)|^2 dk \right)^{1/2}.
$$
If $f \in \dot{H}^{-m}(\R)$, then the $m$-th order anti-derivative of $f$ is square integrable.

We define the anti-derivative of $f \in L^2(\R) \cap \dot{H}^{-1}(\R)$ by
$$
\partial_{\xi}^{-1} f := \int_{-\infty}^{\xi} f(\xi') d \xi'.
$$
Under the condition $f \in L^2(\R) \cap \dot{H}^{-1}(\R)$, the anti-derivative of $f$ is not only square integrable, but also continuous
and decaying to zero as $|\xi| \to \infty$ thanks to Sobolev embedding. In particular,
$f$ is the mean-zero function satisfying the constraint $\int_{-\infty}^{\infty} f(\xi) d \xi = 0$.

Constant $C$ stands for a generic $\epsilon$-independent positive constant, which may change from
one line to another line and from one term to another term in the same inequality.

{\bf Acknowledgments:} This project was initiated during the workshop
on the short-pulse equations organized at the Fields Institute (May, 2011).
D. Pelinovsky is partially supported by
the Alexander von Humboldt Foundation. G. Schneider is partially supported by the Deutsche
Forschungsgemeinschaft (DFG) grant SCHN 520/8-1.

\section{Local solutions of the short-pulse equation}

We start with the local well-posedness of the short-pulse equation (\ref{short-pulse}).
An improved local existence result is obtained by Stefanov {\em et al.} \cite[Theorem 1]{SSK10}.
The following statement will be used in the estimates for the error terms
generated by the local solutions of the short-pulse equation.
In particular, we will specify constraints on the initial data $A(0,\cdot)$
of the short-pulse equation, which would guarantee the existence of a local solution satisfying
the bound (\ref{condition-solution}) assumed in Theorem \ref{theorem-main}.

\begin{proposition}\cite{SSK10}
\label{lemma-existence-short}
Fix $s > \frac{3}{2}$. For any $A_0 \in H^s(\R)$, there exists a time $\tau_0 = \tau_0(\| A_0 \|_{H^s}) > 0$ and
a unique strong solution of the short-pulse equation (\ref{short-pulse}) such that
\begin{equation}
\label{local-solution}
A \in C([0,\tau_0],H^s(\R)) \cap C^1((0,\tau_0],H^{s-1}(\R))
\end{equation}
and $A(0,\cdot) = A_0$. Moreover, the local solution depends continuously on the initial data $A_0$.
\end{proposition}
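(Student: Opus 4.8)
The plan is to recast the short-pulse equation as a first-order-in-time quasilinear evolution equation and then close an $H^s$ energy estimate, obtaining existence through a regularization scheme and uniqueness together with continuous dependence through lower-order estimates. First I would integrate (\ref{short-pulse}) once in $\xi$. On the natural function space, namely the mean-zero subspace $H^s(\R) \cap \dot{H}^{-1}(\R)$ on which the antiderivative $\partial_\xi^{-1}$ of the Notations section is bounded, the equation is equivalent to
\[
A_\tau = 3 A^2 A_\xi + \partial_\xi^{-1} A .
\]
Here the principal part is the variable-coefficient transport operator $3A^2 \partial_\xi$, which is of first order and skew-adjoint up to a bounded zeroth-order multiplier, while $\partial_\xi^{-1}A$ is nonlocal of order $-1$ and hence harmless for the regularity count. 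The threshold $s>\frac{3}{2}$ is precisely the Sobolev exponent for which $H^s(\R) \hookrightarrow W^{1,\infty}(\R)$, so that $\|A_\xi\|_{L^\infty}$ and the algebra property of $H^s(\R)$ are both available; this is what allows the energy estimate to close.

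For existence I would regularize, for instance by Friedrichs mollifiers or by adding a vanishing-viscosity term $\mu \partial_\xi^2 A$, which turns the evolution into a genuine ODE (or parabolic equation) in $H^s(\R)$ with a locally Lipschitz right-hand side, hence uniquely solvable on a short interval by Picard iteration. The crux is a uniform-in-$\mu$ a priori bound: applying $\Lambda^s=(1-\partial_\xi^2)^{s/2}$, pairing with $\Lambda^s A$ in $L^2(\R)$, and using the Kato--Ponce commutator estimate to control $\langle [\Lambda^s, 3A^2]A_\xi, \Lambda^s A \rangle$, while the remaining top-order piece $\langle 3A^2 \partial_\xi \Lambda^s A, \Lambda^s A \rangle$ is integrated by parts and bounded by $\|(A^2)_\xi\|_{L^\infty}\|A\|_{H^s}^2$; the nonlocal term contributes only lower-order quantities on the mean-zero subspace. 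This yields a differential inequality of the form $\frac{d}{d\tau}\|A\|_{H^s}^2 \le C(\|A\|_{H^s})\,\|A\|_{H^s}^2$, giving an existence time $\tau_0$ depending only on $\|A_0\|_{H^s}$ and a bound uniform in $\mu$.

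The uniform bounds then give weak-$\ast$ convergence in $L^\infty([0,\tau_0],H^s(\R))$ and, by interpolation with an $L^2$ Cauchy estimate for the differences $A^{\mu}-A^{\mu'}$ (the quasilinear term being Lipschitz in $L^2$ over a fixed $H^s$-ball), strong convergence in $C([0,\tau_0],H^{s'}(\R))$ for every $s'<s$. The limit solves the equation, and a Bona--Smith argument upgrades this to $A \in C([0,\tau_0],H^s(\R))$, while reading off $A_\tau$ from the equation recovers $A \in C^1((0,\tau_0],H^{s-1}(\R))$. Uniqueness and continuous dependence on $A_0$ follow from the same $L^2$ energy estimate for the difference of two solutions together with the Bona--Smith mollification of the data.

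The main obstacle is the derivative loss in the quasilinear transport term $3A^2 A_\xi$: the $H^s$ estimate can only be closed because the top-order contribution is symmetric and integrates by parts, leaving a term controlled by $\|A_\xi\|_{L^\infty}$, which is exactly why $s>\frac{3}{2}$ is required and why the commutator structure must be tracked carefully. A secondary technical point is the nonlocal antiderivative $\partial_\xi^{-1}A$, which forces the whole analysis onto the mean-zero subspace where $\partial_\xi^{-1}$ is bounded; this is precisely the $\dot{H}^{-1}(\R)$ constraint that reappears in the anti-derivative conditions on $A(0,\cdot)$ used later in the paper.
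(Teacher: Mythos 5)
This proposition is quoted from \cite{SSK10}; the paper offers no proof of its own, so the only meaningful comparison is between your argument and the statement itself --- and there your argument falls short. The proposition asserts local well-posedness for \emph{any} $A_0 \in H^s(\R)$, whereas your very first step, rewriting the equation as $A_\tau = 3A^2A_\xi + \partial_\xi^{-1}A$, is only legitimate on the subspace $H^s(\R)\cap\dot{H}^{-1}(\R)$: for general $A_0\in H^s(\R)$ the term $\partial_\xi^{-1}A$ is not an $L^2$ function (the symbol $1/(ik)$ is not square-integrable against $|\hat{A}|^2$ near $k=0$), so the evolution equation you propose to regularize and estimate is simply undefined on the data class of the proposition. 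You acknowledge that your analysis is ``forced onto the mean-zero subspace,'' but that is precisely the restriction the proposition does not make. The paper is explicit about this distinction: the remark immediately following the proposition notes that without extra constraints $A_\tau$ fails to be continuous at $\tau=0$ (hence the half-open interval in $C^1((0,\tau_0],H^{s-1}(\R))$, which your scheme would instead deliver as $C^1([0,\tau_0],H^{s-1}(\R))$), and Corollary \ref{corollary-first-antider} is exactly the statement you have proved --- the $\dot{H}^{-1}$ case --- presented there as an \emph{additional} result requiring an \emph{additional} hypothesis. A density argument does not repair this, since the $\dot{H}^{-1}$ norms of approximating data blow up and the term $\partial_\xi^{-1}A^{(n)}$ does not converge.

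The missing idea is to avoid inverting $\partial_\xi$ on the linear term altogether. The group $S(\tau)=e^{\tau\partial_\xi^{-1}}$ is defined by the Fourier multiplier $e^{-i\tau/k}$, which has modulus one and is therefore unitary on every $H^s(\R)$ with no mean-zero condition, even though $\partial_\xi^{-1}$ itself is unbounded there; and the nonlinearity enters the Duhamel formula as the exact derivative $(A^3)_\xi$, so no antiderivative of $A$ is ever taken (this is case (a) of Lemma \ref{lemma-inhomogeneous-short} in the paper, as opposed to case (b)). Running the contraction/energy scheme on the integral equation
\begin{equation*}
A(\tau,\cdot) = S(\tau)A_0 + \int_0^\tau S(\tau-\tau')\,\partial_\xi\bigl(A^3(\tau',\cdot)\bigr)\,d\tau'
\end{equation*}
is how \cite{SSK10} reach arbitrary $H^s$ data with $s>\frac{3}{2}$. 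Your $H^s$ energy estimate for the quasilinear transport part (integration by parts at top order, Kato--Ponce commutator, the $H^s\hookrightarrow W^{1,\infty}$ threshold) is the right machinery for the derivative loss and would carry over to that formulation; it is only the treatment of the nonlocal linear term that has to change. As a secondary point, even within your restricted setting you would also need to verify that the $\dot{H}^{-1}$ norm is propagated by the flow, which you do not address.
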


\begin{remark}
Without additional constraints on $A_0 \in H^s(\R)$, $A_{\tau}$ is not continuous at $\tau = 0$ and will
generally violate the bound (\ref{condition-solution}) required in Theorem \ref{theorem-main}.
\end{remark}

We will need some estimates on the higher derivatives of the local solution $A$ with respect to $\tau$.
Applying the anti-derivative $\partial_{\xi}^{-1}$ to locally integrable functions
in the distribution sense, we obtain from the short-pulse equation (\ref{short-pulse}),
\begin{eqnarray}
\label{first-der}
A_{\tau} & = & \partial_{\xi}^{-1} A + (A^3)_{\xi}, \\
\label{second-der}
A_{\tau \tau} & = & \partial_{\xi}^{-2} A + 3 (A^2)_{\xi} \partial_{\xi}^{-1} A + 4 A^3 + \frac{9}{5} (A^5)_{\xi \xi}, \\
\nonumber
A_{\tau \tau \tau} & = & \partial_{\xi}^{-3} A + \partial_{\xi}^{-1} A^3
+ 18 A^2 \partial_{\xi}^{-1} A + 3 (A^2)_{\xi} \partial_{\xi}^{-2} A
+ 6 A_{\xi} (\partial_{\xi}^{-1} A)^2 \\
\label{third-der}
& \phantom{t} & \phantom{texttt} + \frac{27}{2} ( A^4 )_{\xi \xi} \partial_{\xi}^{-1} A
+ \frac{123}{5} (A^5)_{\xi} + \frac{27}{7} (A^7)_{\xi \xi \xi},
\end{eqnarray}
This chain of equations shows that the derivatives of
the local solution $A$ in $\tau$ can be controlled if the anti-derivatives of $A$ in $\xi$
are controlled. The following lemma gives an useful result for this purpose.

\begin{lemma}
\label{lemma-inhomogeneous-short}
Let $B_0 \in L^2(\R)$ and either (a) $F = G_{\xi}$ with $G \in C([0,\tau_0],L^2(\R))$ or
(b) $F \in C^1([0,\tau_0],L^2(\R))$ for some $\tau_0 > 0$.
The linear inhomogeneous short-pulse  equation,
\begin{equation}
\label{lin-short-pulse}
\left. \begin{array}{l} B_{\tau \xi} = B + F, \\
B(0,\cdot) = B_0, \end{array} \right\}
\end{equation}
admits a unique solution $B \in C([0,\tau_0],L^2(\R))$.
\end{lemma}

\begin{proof}
Let $S(\tau) = e^{\tau \partial_{\xi}^{-1}} : L^2(\R) \to L^2(\R)$ denote
the fundamental solution operator associated with the linear short-pulse
equation $B_{\tau \xi} = B$.
 Using the Fourier transform, we see that the operator $S(\tau)$ is
norm-preserving for any $\tau \in \R$ in the sense $\| S(\tau) B_0 \|_{L^2} = \| B_0 \|_{L^2}$
for any $B_0 \in L^2(\R)$.

In case (a), we rewrite (\ref{lin-short-pulse}) in the integral form,
\begin{equation}
\label{int-short-pulse-a}
B(\tau,\cdot) = S(\tau) B_0 + \int_0^{\tau} S(\tau - \tau') G(\tau',\cdot) d\tau'.
\end{equation}
From the norm-preserving property of $S(\tau)$ and the assumption on $G$ in (a),
we obtain a unique solution $B \in C([0,\tau_0],L^2(\R))$.

In case (b), using the decomposition $B = -F + \tilde{B}$, we rewrite
the initial-value problem (\ref{lin-short-pulse}) in the equivalent form,
\begin{equation}
\label{lin-short-pulse-equivalent}
\left. \begin{array}{l} \tilde{B}_{\tau} = \partial_{\xi}^{-1} \tilde{B} +
F_{\tau}, \\
\tilde{B}(0,\cdot) = \tilde{B}_0, \end{array} \right\}
\end{equation}
where $\tilde{B}_0 = B_0 + F(0,\cdot) \in L^2(\R)$. By Duhamel's principle,
the initial-value problem (\ref{lin-short-pulse-equivalent}) can be written in the integral form,
\begin{equation}
\label{int-short-pulse}
\tilde{B}(\tau,\cdot) = S(\tau) \tilde{B}_0 + \int_0^{\tau} S(\tau - \tau') F_{\tau}(\tau',\cdot) d\tau'.
\end{equation}
From the norm-preserving property of $S(\tau)$ and the assumption on $F$ in (b),
we obtain a unique solution $\tilde{B} \in C([0,\tau_0],L^2(\R))$ and hence
the assertion of the lemma.
\end{proof}

We shall now use Lemma \ref{lemma-inhomogeneous-short}
to control the anti-derivatives of the local solution $A$ in $\xi$.

\begin{corollary}
\label{corollary-first-antider}
Fix $s > \frac{3}{2}$. If $A_0 \in H^s(\R) \cap \dot{H}^{-1}(\R)$, then the local solution
of Proposition \ref{lemma-existence-short} satisfies
\begin{equation}
\label{property-first}
\partial_{\xi}^{-1} A \in C([0,\tau_0],H^{s+1}(\R)), \quad A \in C^1([0,\tau_0],H^{s-1}(\R)).
\end{equation}
\end{corollary}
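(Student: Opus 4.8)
The plan is to recognize $B := \partial_\xi^{-1} A$ as the solution of a linear inhomogeneous short-pulse equation of the form covered by Lemma \ref{lemma-inhomogeneous-short}, thereby propagating in time the extra regularity encoded in the hypothesis $A_0 \in \dot{H}^{-1}(\R)$. First I would note that $A_0 \in \dot{H}^{-1}(\R)$ is exactly the statement that $B_0 := \partial_\xi^{-1} A_0 \in L^2(\R)$ (and in particular $A_0$ is mean-zero). Since $s > \frac{3}{2} > \frac{1}{2}$, the Sobolev space $H^s(\R)$ is a Banach algebra, so the local solution $A \in C([0,\tau_0],H^s(\R))$ provided by Proposition \ref{lemma-existence-short} obeys $A^3 \in C([0,\tau_0],H^s(\R)) \hookrightarrow C([0,\tau_0],L^2(\R))$.

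Next I would derive the evolution equation for $B$. Integrating the short-pulse equation (\ref{short-pulse}) once in $\xi$ produces (\ref{first-der}), that is $A_\tau = \partial_\xi^{-1} A + (A^3)_\xi$; writing $A = B_\xi$ this becomes $B_{\tau \xi} = B + (A^3)_\xi$, which is precisely the linear inhomogeneous problem (\ref{lin-short-pulse}) with source $F = G_\xi$ and $G = A^3 \in C([0,\tau_0],L^2(\R))$. This is case (a) of Lemma \ref{lemma-inhomogeneous-short}, which consequently delivers a unique $B \in C([0,\tau_0],L^2(\R))$ satisfying $B(0,\cdot) = B_0$.

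The crucial step is to identify this abstract solution with the antiderivative of the prescribed $A$, namely to prove $B_\xi = A$. I would carry this out on the Fourier side using the norm-preserving propagator $S(\tau) = e^{\tau \partial_\xi^{-1}}$, which acts as multiplication by the unimodular factor $e^{-i\tau/k}$. Applying the Duhamel representation (\ref{int-short-pulse-a}) to $\hat{B}$ and multiplying by $ik$, one finds that $ik \hat{B}$ satisfies the very integral equation obeyed by $\hat{A}$, with the same datum $ik\hat{B}_0 = \hat{A}_0$; uniqueness then forces $ik\hat{B} = \hat{A}$, i.e. $\partial_\xi^{-1} A = B \in C([0,\tau_0],L^2(\R))$. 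Since $B_\xi = A \in C([0,\tau_0],H^s(\R))$, splitting the $H^{s+1}$ norm over the frequency regions $|k| \le 1$ and $|k| > 1$ gives the bound $\| B \|_{H^{s+1}}^2 \le C( \| B \|_{L^2}^2 + \| A \|_{H^s}^2)$, and applying the same bound to time increments of $B$ upgrades continuity from $L^2(\R)$ to $H^{s+1}(\R)$. This establishes the first assertion $\partial_\xi^{-1} A \in C([0,\tau_0],H^{s+1}(\R))$.

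Finally, for the second assertion I would return to (\ref{first-der}): its right-hand side $\partial_\xi^{-1} A + (A^3)_\xi$ now lies in $C([0,\tau_0],H^{s-1}(\R))$, because $\partial_\xi^{-1} A \in C([0,\tau_0],H^{s+1}(\R)) \hookrightarrow C([0,\tau_0],H^{s-1}(\R))$ and $(A^3)_\xi \in C([0,\tau_0],H^{s-1}(\R))$ by the algebra property. On $(0,\tau_0]$ this expression coincides with $A_\tau$ by Proposition \ref{lemma-existence-short}, and since it extends continuously to $\tau = 0$, a fundamental-theorem-of-calculus argument in $H^{s-1}(\R)$ promotes $C^1((0,\tau_0],H^{s-1}(\R))$ to $C^1([0,\tau_0],H^{s-1}(\R))$. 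The main obstacle is exactly this behaviour at $\tau = 0$: Proposition \ref{lemma-existence-short} by itself does not control $A_\tau$ at the initial instant, and it is the constraint $A_0 \in \dot{H}^{-1}(\R)$, propagated through Lemma \ref{lemma-inhomogeneous-short}, that renders the antiderivative term $\partial_\xi^{-1} A$---and hence $A_\tau$---continuous up to $\tau = 0$.
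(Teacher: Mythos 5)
Your argument is correct and follows essentially the same route as the paper: you set $B=\partial_\xi^{-1}A$, observe that it solves the linear inhomogeneous short-pulse equation $B_{\tau\xi}=B+(A^3)_\xi$ with $G=A^3\in C([0,\tau_0],L^2(\R))$ via the Banach-algebra property, invoke case (a) of Lemma \ref{lemma-inhomogeneous-short} to get $B\in C([0,\tau_0],L^2(\R))$, and then recover $\partial_\xi^{-1}A\in C([0,\tau_0],H^{s+1}(\R))$ and $A\in C^1([0,\tau_0],H^{s-1}(\R))$ from equation (\ref{first-der}). Your additional steps (the Fourier-side identification of the Duhamel solution with $\partial_\xi^{-1}A$, the frequency-splitting bound for the $H^{s+1}$ norm, and the continuity of $A_\tau$ up to $\tau=0$) merely make explicit details that the paper leaves implicit.
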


\begin{proof}
Because $A \in C([0,\tau_0],H^s(\R))$ from Proposition \ref{lemma-existence-short}, we only need to prove
that $\partial_{\xi}^{-1} A \in C([0,\tau_0],L^2(\R))$ in order to show
that $\partial_{\xi}^{-1} A \in C([0,\tau_0],H^{s+1}(\R))$. Then,
$A \in C^1([0,\tau_0],H^{s-1}(\R))$ from equation (\ref{first-der}).

Let us denote $B^{(1)} := \partial_{\xi}^{-1} A$. From
equation (\ref{first-der}), we can see that it satisfies
$$
B^{(1)}_{\tau \xi} = B^{(1)} + (A^3)_{\xi}.
$$
Recall that $H^s(\R)$ is a Banach algebra with respect to pointwise multiplication
for any $s > \frac{1}{2}$. By Lemma \ref{lemma-inhomogeneous-short} in case (a),
if $B^{(1)}_0 \in L^2(\R)$, then $B^{(1)} \in C([0,\tau_0],L^2(\R))$.
\end{proof}

\begin{corollary}
\label{corollary-second-antider}
Fix $s > \frac{5}{2}$. If $A_0 \in H^s(\R) \cap \dot{H}^{-2}(\R)$, then the local solution
of Proposition \ref{lemma-existence-short} satisfies
\begin{equation}
\label{property-second}
\partial_{\xi}^{-2} A \in C([0,\tau_0],H^{s+2}(\R)), \quad
\partial_{\xi}^{-1} A \in C^1([0,\tau_0],H^s(\R)), \quad A \in C^2([0,\tau_0],H^{s-2}(\R)).
\end{equation}
\end{corollary}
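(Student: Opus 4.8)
The plan is to bootstrap from Corollary \ref{corollary-first-antider} exactly as that corollary bootstrapped from Proposition \ref{lemma-existence-short}, invoking Lemma \ref{lemma-inhomogeneous-short} one more time. First I would observe that the hypothesis $A_0 \in \dot{H}^{-2}(\R)$ implies $A_0 \in \dot{H}^{-1}(\R)$: splitting the frequency integral at $|k| = 1$, the weight $k^{-2}$ is bounded by $1$ for $|k| \geq 1$ (controlled by the $L^2$ part of the $H^s$ norm) and by $k^{-4}$ for $|k| < 1$ (controlled by the $\dot{H}^{-2}$ norm). Hence Corollary \ref{corollary-first-antider} applies and supplies the starting regularity $\partial_\xi^{-1} A \in C([0,\tau_0], H^{s+1}(\R))$ together with the crucial upgrade $A \in C^1([0,\tau_0], H^{s-1}(\R))$ on the closed interval including $\tau = 0$. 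Setting $B^{(2)} := \partial_\xi^{-2} A$ and applying $\partial_\xi^{-1}$ to equation (\ref{first-der}) produces the linear inhomogeneous short-pulse equation
\[
B^{(2)}_{\tau \xi} = B^{(2)} + A^3, \qquad B^{(2)}(0,\cdot) = \partial_\xi^{-2} A_0 .
\]

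The key structural difference from Corollary \ref{corollary-first-antider} is that the inhomogeneity is now the undifferentiated term $A^3$ rather than $(A^3)_\xi$, so I would invoke Lemma \ref{lemma-inhomogeneous-short} in case (b) rather than case (a). This requires $A^3 \in C^1([0,\tau_0], L^2(\R))$, which is exactly where the upgraded regularity from Corollary \ref{corollary-first-antider} enters: writing $(A^3)_\tau = 3 A^2 A_\tau$ with $A \in C([0,\tau_0], H^s(\R))$ and $A_\tau \in C([0,\tau_0], H^{s-1}(\R))$, the Banach algebra property of $H^{s-1}(\R)$ (valid since $s - 1 > \frac{3}{2} > \frac{1}{2}$) gives $(A^3)_\tau \in C([0,\tau_0], H^{s-1}(\R)) \subset C([0,\tau_0], L^2(\R))$. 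The initial datum satisfies $B^{(2)}(0,\cdot) = \partial_\xi^{-2} A_0 \in L^2(\R)$ precisely because $A_0 \in \dot{H}^{-2}(\R)$. Lemma \ref{lemma-inhomogeneous-short}(b) then yields $\partial_\xi^{-2} A \in C([0,\tau_0], L^2(\R))$. Combining this with $A \in C([0,\tau_0], H^s(\R))$ and splitting the frequency integral at $|k| = 1$ as above, now with weight $(1+k^2)^{s+2} k^{-4}$ (bounded by $C(1+k^2)^s$ for $|k| \geq 1$ and controlled by the $L^2$ norm of $B^{(2)}$ for $|k| < 1$), upgrades the conclusion to $\partial_\xi^{-2} A \in C([0,\tau_0], H^{s+2}(\R))$.

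The two remaining assertions follow by reading off the chain of equations. From $(\partial_\xi^{-1} A)_\tau = \partial_\xi^{-2} A + A^3$, the right-hand side lies in $C([0,\tau_0], H^s(\R))$ (both terms do, by the previous paragraph and the algebra property), so $\partial_\xi^{-1} A \in C^1([0,\tau_0], H^s(\R))$. Differentiating (\ref{first-der}) once more in $\tau$ reproduces equation (\ref{second-der}) for $A_{\tau \tau}$, and I would check that each of its four terms lands in $H^{s-2}(\R)$: $\partial_\xi^{-2} A \in H^{s+2} \subset H^{s-2}$; the product $(A^2)_\xi \, \partial_\xi^{-1} A \in H^{s-1} \subset H^{s-2}$ by the algebra property, using $(A^2)_\xi \in H^{s-1}$ and $\partial_\xi^{-1} A \in H^{s+1}$; $A^3 \in H^s \subset H^{s-2}$; and $(A^5)_{\xi \xi} \in H^{s-2}$ since $A^5 \in H^s$. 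Continuity in $\tau$ of each term is inherited from that of $A$ and its antiderivatives, so $A \in C^2([0,\tau_0], H^{s-2}(\R))$.

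The only genuine obstacle is the passage through case (b) of Lemma \ref{lemma-inhomogeneous-short}: unlike in Corollary \ref{corollary-first-antider}, the undifferentiated inhomogeneity $A^3$ forces a $C^1$-in-time hypothesis, and verifying it depends essentially on having first promoted $A$ to $C^1([0,\tau_0], H^{s-1}(\R))$ on the closed interval (including $\tau = 0$). Everything else is routine frequency splitting and repeated use of the Banach algebra structure of $H^{s-1}(\R)$, which is available precisely because $s > \frac{5}{2}$.
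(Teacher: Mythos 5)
Your proposal is correct and follows essentially the same route as the paper: define $B^{(2)} = \partial_\xi^{-2} A$, derive $B^{(2)}_{\tau\xi} = B^{(2)} + A^3$, verify $A^3 \in C^1([0,\tau_0],L^2(\R))$ from the regularity supplied by Corollary \ref{corollary-first-antider}, and apply Lemma \ref{lemma-inhomogeneous-short} in case (b). You merely spell out details the paper leaves implicit (the embedding $L^2 \cap \dot{H}^{-2} \subset \dot{H}^{-1}$, the frequency-splitting upgrade to $H^{s+2}$, and the term-by-term check of equation (\ref{second-der})), all of which are correct.
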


\begin{proof}
Denote $B^{(2)} := \partial_{\xi}^{-2} A$ and compute
$$
B^{(2)}_{\xi \tau} = B^{(1)}_{\tau} = B^{(2)} + A^3.
$$
We note that $F = A^3 \in C^1([0,\tau_0],L^2(\R))$ because
of property (\ref{property-first}).
By Lemma \ref{lemma-inhomogeneous-short} in case (b), if $B^{(2)}_0 \in L^2(\R)$, then
$B^{(2)} \in C([0,\tau_0],L^2(\R))$. Hence $\partial_{\xi}^{-2} A \in C([0,\tau_0],H^{s+2}(\R))$
and $\partial_{\xi}^{-1} A \in C^1([0,\tau_0],H^s(\R))$. Then, $A \in C^2([0,\tau_0],H^{s-2}(\R))$
follows from property (\ref{property-first}) and equation (\ref{second-der}).
\end{proof}

\begin{corollary}
\label{corollary-third-antider}
Fix $s > \frac{7}{2}$. If $A_0 \in H^s(\R) \cap \dot{H}^{-2}(\R)$
and $\partial_{\xi}^{-3} A_0 + \partial_{\xi}^{-1} A_0^3 \in L^2(\R)$,
then the local solution of Proposition \ref{lemma-existence-short} satisfies
\begin{equation}
\label{property-third}
A \in C^3([0,\tau_0],H^{s-3}(\R))
\end{equation}
\end{corollary}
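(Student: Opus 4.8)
The plan is to mirror the strategy of Corollaries \ref{corollary-first-antider} and \ref{corollary-second-antider}, but the new difficulty is that neither $\partial_{\xi}^{-3} A$ nor $\partial_{\xi}^{-1} A^3$ need be individually square integrable; only their sum is controlled by the hypothesis. I would therefore introduce the combined variable
$$
W := \partial_{\xi}^{-3} A + \partial_{\xi}^{-1} A^3,
$$
so that $W(0,\cdot) \in L^2(\R)$ holds precisely by assumption, and treat $W$ as a single unknown rather than splitting it into its two pieces.

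First I would dispose of the remaining terms in the representation (\ref{third-der}) of $A_{\tau \tau \tau}$. Corollaries \ref{corollary-first-antider} and \ref{corollary-second-antider} give $\partial_{\xi}^{-1} A \in C([0,\tau_0],H^{s+1})$ and $\partial_{\xi}^{-2} A \in C([0,\tau_0],H^{s+2})$, while $A \in C([0,\tau_0],H^s)$. Since $H^s(\R)$ is a Banach algebra for $s > \frac{1}{2}$, every term on the right-hand side of (\ref{third-der}) other than $\partial_{\xi}^{-3} A + \partial_{\xi}^{-1} A^3$ is a product of such factors and lands in $C([0,\tau_0],H^{s-3})$, the worst being $(A^7)_{\xi \xi \xi}$. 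Thus it remains only to control $W$, and in fact I will get more regularity than needed.

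The main step is to derive for $W$ an equation of the type treated in Lemma \ref{lemma-inhomogeneous-short}. Applying $\partial_{\xi}^{-2}$ to (\ref{first-der}) shows $W = \partial_{\tau}(\partial_{\xi}^{-2} A) = B^{(2)}_{\tau}$, where $B^{(2)} = \partial_{\xi}^{-2} A$ satisfies $B^{(2)}_{\xi \tau} = B^{(2)} + A^3$ from the proof of Corollary \ref{corollary-second-antider}. Differentiating this identity in $\tau$ gives the linear inhomogeneous short-pulse equation
$$
W_{\xi \tau} = W + (A^3)_{\tau},
$$
with forcing $F := (A^3)_{\tau} = 3 A^2 A_{\tau}$. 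Using $A \in C^2([0,\tau_0],H^{s-2})$ from Corollary \ref{corollary-second-antider} together with the algebra property, I would verify that $F \in C([0,\tau_0],L^2)$ and that $F_{\tau} = 6 A A_{\tau}^2 + 3 A^2 A_{\tau \tau} \in C([0,\tau_0],L^2)$, so that $F \in C^1([0,\tau_0],L^2)$ and case (b) of Lemma \ref{lemma-inhomogeneous-short} applies with $W(0,\cdot) \in L^2(\R)$. This yields $W \in C([0,\tau_0],L^2)$.

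Finally I would upgrade the regularity of $W$. Since $W_{\xi \xi \xi} = A + (A^3)_{\xi \xi} \in C([0,\tau_0],H^{s-2})$, the low frequencies of $W$ are controlled by $\| W \|_{L^2}$ and the high frequencies by $\| W_{\xi \xi \xi} \|_{H^{s-2}}$, giving $W \in C([0,\tau_0],H^{s+1})$. Combined with the second paragraph, all terms of (\ref{third-der}) then lie in $C([0,\tau_0],H^{s-3})$, which is exactly the assertion $A \in C^3([0,\tau_0],H^{s-3})$. The one genuinely new obstacle compared with the earlier corollaries is the cancellation encoded in $W$: controlling $\partial_{\xi}^{-3} A$ or $\partial_{\xi}^{-1} A^3$ separately would require $A_0 \in \dot{H}^{-3}(\R)$ and $A_0^3 \in \dot{H}^{-1}(\R)$ respectively, neither of which is assumed, whereas the single hypothesis $\partial_{\xi}^{-3} A_0 + \partial_{\xi}^{-1} A_0^3 \in L^2(\R)$ is precisely the initial datum needed for the forced equation satisfied by $W$.
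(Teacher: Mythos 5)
Your proposal is correct and follows essentially the same route as the paper: the paper also introduces the combined variable $B^{(3)} := \partial_{\xi}^{-3} A + \partial_{\xi}^{-1} A^3$, derives $B^{(3)}_{\xi\tau} = B^{(3)} + 3A^2 \partial_{\xi}^{-1}A + 9A^4 A_{\xi}$ (which is exactly your forcing $3A^2 A_{\tau}$ after substituting (\ref{first-der})), and applies case (b) of Lemma \ref{lemma-inhomogeneous-short}. Your explicit Fourier-side upgrade of $W$ from $L^2$ to $H^{s+1}$ via $W_{\xi\xi\xi} = A + (A^3)_{\xi\xi}$ is a detail the paper leaves implicit, but it is the same argument.
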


\begin{proof}
Denote $B^{(3)} := \partial_{\xi}^{-3} A + \partial_{\xi}^{-1} A^3$ and compute
$$
B^{(3)}_{\xi \tau} = B^{(3)} + 3 A^2 \partial_{\xi}^{-1} A + 9 A^4 A_{\xi}.
$$
We note that $F = 3 A^2 \partial_{\xi}^{-1} A + 9 A^4 A_{\xi} \in C^1([0,\tau_0],L^2(\R))$ because
of property (\ref{property-second}).
By Lemma \ref{lemma-inhomogeneous-short} in case (b), if
$B^{(3)}_0 \in L^2(\R)$, then $B^{(3)} \in C([0,\tau_0],L^2(\R))$.
Then, $A \in C^3([0,\tau_0],H^{s-3}(\R))$ follows from property (\ref{property-second})
and equation (\ref{third-der}).
\end{proof}

Small-norm solutions are known to exist for infinite time of the short-pulse equation.
This result was originally proved in $H^2$ \cite[Theorem 1]{PS10}. Using the blow-up
alternative for the short-pulse equation \cite[Lemma 2]{LPS09}, one can extend this result
to any $s \geq 2$. To be precise, we have the following result.

\begin{proposition}\cite{LPS09,PS10}
\label{lemma-global-solution}
Fix $s \geq 2$. If $A_0 \in H^s(\R)$ and
\begin{equation}
\label{global-solution}
\| A_0' \|_{L^2}^2 + \| A_0'' \|_{L^2}^2 < \frac{1}{6},
\end{equation}
the maximal existence time of the local solution of Proposition \ref{lemma-existence-short} extends to infinity.
Moreover, there exists $C > 0$ such that the unique solution $A \in C(\R_+,H^s(\R))$
of the short-pulse equation (\ref{short-pulse}) with $A(0,\cdot) = A_0$
satisfies $\| A(\tau,\cdot) \|_{H^s} \leq C$ for all $\tau \in \R_+$.
\end{proposition}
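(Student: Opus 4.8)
The plan is to prove the statement in two stages, matching the structure indicated in the text: first obtain the global, uniform-in-time bound in the borderline energy space $H^2$, and then upgrade it to arbitrary $s \geq 2$.

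\emph{Stage 1 (the case $s=2$).} I would exploit the integrability of (\ref{short-pulse}) and its hierarchy of conserved quantities \cite{Br05}. Two elementary facts make $H^2$ critical. Integrating (\ref{short-pulse}) in $\xi$ forces admissible solutions to be mean-zero, so $\partial_\xi^{-1}A$ is well defined; using (\ref{first-der}) one then checks that $\|A\|_{L^2}$ is a conserved quantity. The seminorm $Q := \|A_\xi\|_{L^2}^2 + \|A_{\xi\xi}\|_{L^2}^2$ is \emph{not} itself conserved, but it coincides, to leading quadratic order, with a genuine conserved functional $E$ of the hierarchy whose remaining terms are quartic and higher and can be written purely in terms of $A_\xi$ and $A_{\xi\xi}$ (this is why the smallness condition (\ref{global-solution}) involves only $A_0'$ and $A_0''$). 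The crucial estimate is to dominate these higher-order terms by the quadratic part via the one-dimensional Gagliardo--Nirenberg inequality $\|A_\xi\|_{L^\infty}^2 \leq \|A_\xi\|_{L^2}^2 + \|A_{\xi\xi}\|_{L^2}^2 = Q$, which yields $E = Q + O(Q^2)$ and hence an equivalence $\frac{1}{2}Q \leq E \leq \frac{3}{2}Q$ on the sublevel set where $Q$ is below a fixed threshold. Since $E$ is constant along the flow and $Q(0) < \frac{1}{6}$ by assumption, a standard continuity (bootstrap) argument shows $Q$ can never reach the threshold, producing a uniform-in-time bound on $Q$; together with the conserved $\|A\|_{L^2}$ this bounds $\|A\|_{H^2}$ uniformly. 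Combined with Proposition \ref{lemma-existence-short} and the usual continuation criterion, the $H^2$ solution extends to all $\tau \in \R_+$.

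\emph{Stage 2 (the case $s>2$).} Here I would first invoke the blow-up alternative \cite[Lemma 2]{LPS09}: a local $H^s$ solution persists as long as a lower-order quantity controlled by the $H^2$ norm (for example $\int_0^\tau \|A_\xi(\tau')\|_{L^\infty}\,d\tau'$, with $\|A_\xi\|_{L^\infty} \leq C\|A\|_{H^2}$) remains finite. The uniform $H^2$ bound from Stage 1 makes this quantity finite on every interval, so the maximal existence time is infinite and $A \in C(\R_+, H^s)$. For the \emph{uniform} $H^s$ bound a plain energy estimate is insufficient: the quasilinear structure of (\ref{short-pulse}) only gives $\frac{d}{d\tau}\|A\|_{H^s}^2 \leq C\|A\|_{H^2}^2\,\|A\|_{H^s}^2$, and Gronwall then yields merely exponential-in-$\tau$ growth. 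Instead I would appeal again to the higher members of the conserved hierarchy: inductively, the $n$-th conserved quantity has quadratic leading part $\|\partial_\xi^n A\|_{L^2}^2$ with corrections polynomially controlled by the already-bounded lower norms $\|A\|_{H^{n-1}}$, so conservation gives a uniform bound on $\|\partial_\xi^n A\|_{L^2}$; interpolation then covers non-integer $s$.

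\emph{Expected main obstacle.} The delicate point is entirely in Stage 1: identifying the correct conserved functional $E$ and proving its coercivity, i.e. that its higher-order terms are strictly dominated by the quadratic part exactly on $\{Q < \frac{1}{6}\}$. The numerical threshold $\frac{1}{6}$ is what is needed to close the bootstrap, so the Sobolev constants in the estimate $E = Q + O(Q^2)$ must be tracked sharply enough to reach it. Propagating coercivity up the hierarchy in Stage 2 is more routine, the only care being to express each correction through lower, already-controlled norms.
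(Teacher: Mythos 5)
The paper offers no proof of its own for this proposition: it is imported wholesale from \cite{PS10} (the global $H^2$ result, obtained from the conserved quantities of the integrable hierarchy whose quadratic part is $\| A_\xi \|_{L^2}^2 + \| A_{\xi\xi} \|_{L^2}^2$, closed by a bootstrap under the threshold $\frac{1}{6}$) and \cite{LPS09} (the blow-up alternative used to pass to arbitrary $s \geq 2$), and your two-stage plan reproduces exactly the structure the paper sketches in the sentence preceding the statement. Your additional care in Stage 2 about why the $H^s$ bound for $s>2$ is uniform in $\tau$ (rather than merely finite on compact intervals, which is all the blow-up alternative plus Gronwall delivers) addresses a point the paper glosses over and is the right way to justify the stated $\sup_{\tau \in \R_+} \| A(\tau,\cdot) \|_{H^s} \leq C$.
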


\begin{remark}
In the justification result of Theorem \ref{theorem-main},
we need small-norm solutions to control linear error terms.
On the other hand, we do not need continuation of these solutions to infinite time because
the justification analysis only holds on finite time intervals in $\tau$.
\end{remark}

\section{Local solutions of the quasilinear Klein--Gordon equation}

Local well-posedness of the quasi-linear equations was studied by Kato \cite{Kato}.
To employ his formalism, we shall rewrite the quasilinear Klein--Gordon equation (\ref{Maxwell})
as a system of first-order quasi-linear equations with a symmetric matrix. Because solutions of
the short-pulse equation are small solutions of the quasilinear Klein--Gordon equation in the
$L^{\infty}$ norm, we can assume that $\| u \|_{L^{\infty}} < \frac{1}{\sqrt{3}}$
and write
\begin{equation}
u_1 = u_t, \quad u_2 = (1 - 3 u^2)^{1/2} u_x, \quad u_3 = u.
\end{equation}
The quasilinear Klein--Gordon equation (\ref{Maxwell}) is equivalent to the system of
first-order quasi-linear equations,
\begin{equation}
\label{system-quasi-linear}
\frac{\partial}{\partial t} \left[ \begin{array}{c} u_1 \\ u_2 \\ u_3 \end{array} \right] +
\left[ \begin{array}{ccc} 0 & -(1 - 3 u_3^2)^{1/2} & 0 \\
-(1 - 3 u_3^2)^{1/2} & 0 & 0 \\ 0 & 0 & 0 \end{array} \right] \frac{\partial}{\partial x}
\left[ \begin{array}{c} u_1 \\ u_2 \\ u_3 \end{array} \right] =
\left[ \displaystyle{\begin{array}{c} -u_3 - \frac{3 u_2^2 u_3}{1 - 3 u_3^2} \\ - \frac{3 u_1 u_2 u_3}{1 - 3 u_3^2} \\ u_1 \end{array}} \right].
\end{equation}
By Theorems II and III in \cite{Kato}, a unique local solution of
system (\ref{system-quasi-linear}) for the vector $(u_1,u_2,u_3)$ exists in space
$C([0,t_0],H^s(\R)) \cap C^1([0,t_0],H^{s-1}(\R)$  for some $t_0 > 0$ and $s > \frac{3}{2}$. Coming back
to the quasilinear Klein--Gordon equation (\ref{Maxwell}), this result is formulated as follows.

\begin{proposition}\cite{Kato}
\label{lemma-existence-Maxwell}
Fix $s > \frac{3}{2}$. For any $u_0 \in H^{s+1}(\R)$ and $v_0 \in H^s(\R)$
such that $\| u_0 \|_{L^{\infty}} < \frac{1}{\sqrt{3}}$,
there exists a time $t_0 = t_0(\| u_0 \|_{H^{s+1}}+\|v_0\|_{H^s}) > 0$ and
a unique strong solution of the quasilinear Klein--Gordon equation (\ref{Maxwell}) such that
\begin{equation}
\label{Kato-solution}
u \in C([0,t_0],H^{s+1}(\R)) \cap C^1([0,t_0],H^s(\R)) \cap C^2([0,t_0],H^{s-1}(\R)),
\end{equation}
subject to the initial data $u(0,\cdot) = u_0$ and $u_t(0,\cdot) = v_0$.
Moreover, the local solution depends continuously on the initial data $(u_0,v_0)$.
\end{proposition}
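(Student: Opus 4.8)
The plan is to deduce the statement from Kato's abstract theory for quasilinear symmetric hyperbolic systems \cite{Kato}, applied to the first-order reformulation (\ref{system-quasi-linear}), and then to translate the regularity of the vector $(u_1,u_2,u_3)$ back into regularity of the scalar field $u$. First I would record the structural facts that feed Kato's hypotheses. The matrix multiplying $\partial_x$ in (\ref{system-quasi-linear}) is symmetric and depends on the unknown only through $u_3$, via the map $u_3 \mapsto (1-3u_3^2)^{1/2}$, which is real-analytic on the interval $|u_3| < \frac{1}{\sqrt{3}}$; the inhomogeneous right-hand side is a smooth function of $(u_1,u_2,u_3)$ on the same domain, its only singularities being the denominators $1-3u_3^2$, which stay bounded away from zero there. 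Since $s > \frac{3}{2} > \frac{1}{2}$, the space $H^s(\R)$ is a Banach algebra and embeds continuously into $L^\infty(\R)$, so the corresponding Nemytskii operators map bounded subsets of $H^s(\R)$ on which $\|u_3\|_{L^\infty} < \frac{1}{\sqrt{3}}$ into $H^s(\R)$ in a locally Lipschitz manner. These are precisely the symmetry and regularity requirements of Theorems II and III in \cite{Kato}.

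Next I would convert the data. Given $u_0 \in H^{s+1}(\R)$ and $v_0 \in H^s(\R)$ with $\|u_0\|_{L^\infty} < \frac{1}{\sqrt{3}}$, I set
\[
u_1(0,\cdot) = v_0, \quad u_2(0,\cdot) = (1-3u_0^2)^{1/2} u_0', \quad u_3(0,\cdot) = u_0,
\]
and note that all three components lie in $H^s(\R)$: indeed $u_0' \in H^s(\R)$ because $u_0 \in H^{s+1}(\R)$, while the algebra and composition properties place $(1-3u_0^2)^{1/2} u_0'$ in $H^s(\R)$. Kato's theory then yields, on an interval $[0,t_0]$ with $t_0$ depending on $\|u_0\|_{H^{s+1}} + \|v_0\|_{H^s}$, a unique solution $(u_1,u_2,u_3) \in C([0,t_0],H^s(\R)) \cap C^1([0,t_0],H^{s-1}(\R))$ depending continuously on the initial vector. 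Because $H^s(\R) \hookrightarrow L^\infty(\R)$ and $t \mapsto u_3(t,\cdot)$ is continuous into $H^s(\R)$ with $\|u_3(0,\cdot)\|_{L^\infty} < \frac{1}{\sqrt{3}}$, after possibly shrinking $t_0$ the hyperbolicity constraint $\|u_3(t,\cdot)\|_{L^\infty} < \frac{1}{\sqrt{3}}$ persists on all of $[0,t_0]$, so the reduction stays valid throughout.

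It then remains to recover the claimed regularity of $u := u_3$. From the solution I already have $u \in C([0,t_0],H^s(\R)) \cap C^1([0,t_0],H^{s-1}(\R))$, but I would upgrade this by inverting the algebraic relation defining $u_2$: on the set where $\|u_3\|_{L^\infty} < \frac{1}{\sqrt{3}}$ the map $u_3 \mapsto (1-3u_3^2)^{-1/2}$ is smooth and bounded, so $u_x = (1-3u^2)^{-1/2} u_2 \in C([0,t_0],H^s(\R))$ by the algebra property, whence $u \in C([0,t_0],H^{s+1}(\R))$. The third equation $\partial_t u_3 = u_1$ gives $u_t \in C([0,t_0],H^s(\R))$, i.e. $u \in C^1([0,t_0],H^s(\R))$; and one further differentiation shows $u_{tt} = (1-3u^2)u_{xx} - 6u u_x^2 - u \in C([0,t_0],H^{s-1}(\R))$, since $u \in C([0,t_0],H^{s+1}(\R))$ makes $u_{xx} \in C([0,t_0],H^{s-1}(\R))$ and all products are controlled because $H^{s-1}(\R)$ is itself a Banach algebra for $s > \frac{3}{2}$. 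Uniqueness and continuous dependence for $u$ follow from the corresponding statements for the first-order system together with the continuity of the two nonlinear correspondences $(u_0,v_0) \mapsto (u_1(0),u_2(0),u_3(0))$ and $(u_1,u_2,u_3) \mapsto u_3$.

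I expect the main obstacle to be the precise derivative bookkeeping of the last step — in particular the one-derivative gain that promotes $u_2 \in H^s$ to $u \in H^{s+1}$ — carried out simultaneously with the requirement that the quadratic denominators $1 - 3u_3^2$ never degenerate. This is exactly why the $L^\infty$ smallness $\|u_0\|_{L^\infty} < \frac{1}{\sqrt{3}}$ is imposed, and why it must be propagated forward in time before the equivalence between (\ref{Maxwell}) and (\ref{system-quasi-linear}) can be invoked on the whole interval $[0,t_0]$.
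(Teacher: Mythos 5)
Your proposal follows the same route as the paper: rewrite (\ref{Maxwell}) as the symmetric first-order system (\ref{system-quasi-linear}), invoke Theorems II and III of \cite{Kato} for the vector $(u_1,u_2,u_3)$, and translate back to $u$. The paper leaves the translation step implicit, whereas you correctly supply the details (persistence of $\|u_3\|_{L^\infty}<\frac{1}{\sqrt 3}$, the recovery $u_x=(1-3u^2)^{-1/2}u_2\in H^s$ giving $u\in H^{s+1}$, and the identity $u_{tt}=(1-3u^2)u_{xx}-6uu_x^2-u$ for the $C^2$ regularity), so your argument is a correct and slightly more complete version of the same proof.
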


Since the existence time $t_0$ may depend on the initial norm $\| u_0 \|_{H^{s+1}} + \| v_0 \|_{H^s}$,
it may be difficult to continue the local solution for infinite time if the norms increase
along the local solution. In some cases, blow-up in a finite time is possible in the $H^{s+1}$ norm
for $u(t,\cdot)$. By the main result (a--ii) of Yin \cite[Theorem 2.3]{Yin}, if
the blow-up occurs in a finite time, it occurs simultaneously in all
$H^{s+1}$ norms for any $s > \frac{3}{2}$. This result is formulated as follows.

\begin{proposition}\cite{Yin}
\label{lemma-Kato-blowup}
The maximal existence time for the local solution in Proposition \ref{lemma-existence-Maxwell} is independent of
$s > \frac{3}{2}$ in the following sense. If two local solutions of the quasilinear Klein--Gordon equation
(\ref{Maxwell}) exist
\begin{equation}
\label{Kato-solution-1}
u \in C([0,t_1),H^{s_1+1}(\R)) \cap C^1([0,t_1),H^{s_1}(\R)) \cap C^2([0,t_1),H^{s_1-1}(\R)),
\end{equation}
and
\begin{equation}
\label{Kato-solution-2}
u \in C([0,t_2),H^{s_2+1}(\R)) \cap C^1([0,t_2),H^{s_2}(\R)) \cap C^2([0,t_2),H^{s_2-1}(\R)),
\end{equation}
for the same initial data $u_0 \in H^{s_1+1}(\R) \cap H^{s_2+1}(\R)$
and $v_0 \in H^{s_1}(\R) \cap H^{s_2}(\R)$ with $s_1, s_2 > \frac{3}{2}$ and $s_1 \neq s_2$,
then $t_1 = t_2$.
\end{proposition}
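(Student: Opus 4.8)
The plan is to combine the uniqueness of local solutions with a continuation (blow-up) criterion whose controlling quantity is a low-order norm that embeds into both regularity classes. Without loss of generality I would take $s_1 < s_2$. Since $H^{s_2+1}(\R) \hookrightarrow H^{s_1+1}(\R)$ and $H^{s_2}(\R) \hookrightarrow H^{s_1}(\R)$, the solution in the class (\ref{Kato-solution-2}) is in particular a local solution in the class (\ref{Kato-solution-1}); by the uniqueness assertion of Proposition \ref{lemma-existence-Maxwell}, the two solutions coincide on the common interval $[0,\min\{t_1,t_2\})$, and the maximal existence time at the lower regularity cannot be smaller, so that $t_2 \le t_1$. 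The whole content of the proposition is therefore the reverse inequality $t_1 \le t_2$: even at the lower regularity the solution cannot persist longer than at the higher one.

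To prove $t_1 \le t_2$ I would argue by contradiction and suppose $t_2 < t_1$. The point is that the compact interval $[0,t_2]$ is contained in the open interval $[0,t_1)$ on which the $s_1$-solution is continuous, so that $M := \sup_{t \in [0,t_2]} \big( \|u(t,\cdot)\|_{H^{s_1+1}} + \|u_t(t,\cdot)\|_{H^{s_1}} \big) < \infty$. Moreover the first-order reduction (\ref{system-quasi-linear}) is valid and uniformly hyperbolic throughout $[0,t_1)$, so by continuity $\sup_{t \in [0,t_2]} \|u(t,\cdot)\|_{L^\infty} < \frac{1}{\sqrt{3}}$ and the coefficient $(1 - 3 u_3^2)^{1/2}$ stays bounded away from zero on $[0,t_2]$. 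Because $s_1 > \frac{3}{2}$, the Sobolev embedding $H^{s_1}(\R) \hookrightarrow W^{1,\infty}(\R)$ then bounds the state vector $(u_1,u_2,u_3)$ and its first spatial derivatives in $L^\infty$, uniformly on $[0,t_2]$ and in terms of $M$ alone.

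It remains to feed these bounds into the continuation criterion for the symmetric hyperbolic system (\ref{system-quasi-linear}): the $H^{s_2}$ norm of $(u_1,u_2,u_3)$, equivalently the $H^{s_2+1}$ norm of $u$, stays finite as long as the system remains uniformly hyperbolic and $\int_0^t \|\partial_x (u_1,u_2,u_3)(t',\cdot)\|_{L^\infty}\, dt'$ stays finite. This is exactly the blow-up alternative established by Yin \cite[Theorem 2.3]{Yin}. Both controlling quantities were just shown to be bounded on $[0,t_2]$, so the criterion forces the $s_2$-solution to extend to an interval strictly larger than $[0,t_2)$, contradicting the maximality of $t_2$. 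Hence $t_1 \le t_2$, and combined with $t_2 \le t_1$ this yields $t_1 = t_2$.

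The step I expect to be the real obstacle is the continuation criterion itself, that is, the a priori energy estimate showing that the high-order norm grows at a rate controlled only by the Lipschitz-type norm $\|\partial_x(u_1,u_2,u_3)\|_{L^\infty}$. In the quasilinear setting this requires applying $\partial_x^{s_2}$ (or a fractional version) to the system, estimating the resulting commutators through Moser- and Kato--Ponce-type inequalities, and closing the loop with Gronwall's inequality, all while keeping the degenerating factor $(1 - 3 u_3^2)^{1/2}$ under control. Since this estimate is precisely the one provided by Yin \cite{Yin}, I would invoke it rather than reproduce the calculation.
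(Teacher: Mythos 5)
Your proposal is correct and is consistent with the paper's treatment: the paper gives no proof of its own for this proposition, deferring entirely to Yin \cite[Theorem 2.3]{Yin}, and your reconstruction (embedding plus uniqueness from Proposition \ref{lemma-existence-Maxwell} to get $t_2 \le t_1$, then a continuation criterion controlled by $\|u\|_{L^\infty}<\tfrac{1}{\sqrt{3}}$ and the $W^{1,\infty}$ norm of the state vector of (\ref{system-quasi-linear}) to get $t_1 \le t_2$) is precisely the standard argument behind that citation. You also correctly identify the commutator/energy estimate as the only nontrivial ingredient and, like the authors, invoke Yin for it rather than reproving it.
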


Results of Propositions \ref{lemma-existence-Maxwell} and \ref{lemma-Kato-blowup}
are useful to establish the criterion that controls the breakdown of local solutions
for the quasilinear Klein--Gordon equation (\ref{Maxwell}). The following lemma
gives the continuation criterion.

\begin{lemma}
\label{lemma-Kato-criterion}
The local solution of the quasilinear Klein--Gordon equation (\ref{Maxwell})
in Proposition \ref{lemma-existence-Maxwell} is continued on the time
interval $[0,t_0]$ for some $t_0 > 0$ as long as
\begin{equation}
\label{Kato-criterion}
\sup_{t \in [0,t_0]} \| u(t,\cdot) \|_{L^{\infty}} < \frac{1}{\sqrt{3}}, \quad
\mbox{\rm and} \quad  \sup_{t \in [0,t_0]} \left( \| u_t(t,\cdot) \|_{L^{\infty}} +
\| u_x(t,\cdot) \|_{L^{\infty}} \right) < \infty.
\end{equation}
\end{lemma}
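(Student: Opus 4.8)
The plan is to derive an a priori energy estimate for the local solution and combine it with the blow-up alternative underlying Propositions \ref{lemma-existence-Maxwell} and \ref{lemma-Kato-blowup}. First I would recast the quasilinear Klein--Gordon equation (\ref{Maxwell}) in the scalar wave form
\[
u_{tt} = (1 - 3 u^2)\, u_{xx} - 6 u u_x^2 - u ,
\]
which is equivalent to the symmetric system (\ref{system-quasi-linear}). In this form the first bound in (\ref{Kato-criterion}), $\| u \|_{L^\infty} < 1/\sqrt{3}$, guarantees strict hyperbolicity, since the squared speed $c^2 := 1 - 3 u^2$ satisfies $c^2 \geq 1 - 3 \| u \|_{L^\infty}^2 > 0$ uniformly. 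This uniform positivity is exactly what is needed for the natural energy to be coercive.

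Fixing $s > \tfrac{3}{2}$ (using fractional Bessel potentials $J^s$ where necessary, and recalling that the maximal time is $s$-independent by Proposition \ref{lemma-Kato-blowup}), I would work with the energy
\[
\mathcal{E}_s(t) = \frac{1}{2} \sum_{0 \leq j \leq s} \int_{\R} \left[ (\partial_x^{j} u_t)^2 + c^2 (\partial_x^{j+1} u)^2 + (\partial_x^{j} u)^2 \right] dx ,
\]
which is equivalent to $\| u_t \|_{H^s}^2 + \| u \|_{H^{s+1}}^2$ as long as $c^2$ is bounded below. Differentiating the wave equation, using $\partial_x^{j} u_t$ as a multiplier, and integrating by parts in $x$, the time derivative of the principal part produces the coefficient-variation terms $\partial_t c^2 = -6 u u_t$ and $\partial_x c^2 = -6 u u_x$. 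Both are bounded pointwise by $C(\| u \|_{L^\infty})\,(\| u_t \|_{L^\infty} + \| u_x \|_{L^\infty})$, which is precisely the content of the second bound in (\ref{Kato-criterion}), and the base $L^2$ energy is controlled in the standard way.

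The main obstacle is the top-order commutator $[\partial_x^{s}, c^2]\, u_{xx}$ arising at $j = s$: a direct Moser estimate of it carries a factor $\| u_{xx} \|_{L^\infty}$, which is not controlled by (\ref{Kato-criterion}) and would force an unwanted second-derivative bound. I expect to circumvent this by rewriting the principal term as $c^2 u_{xx} = \partial_x(c^2 u_x) - (\partial_x c^2)\, u_x$ before differentiating, so that the dangerous commutator is converted into one of effective order $s+1$ acting on $u_x$ rather than on $u_{xx}$, namely $[\partial_x^{s+1}, c^2]\, u_x$, together with the semilinear remainder $\partial_x^{s}(6 u u_x^2)$. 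The Kato--Ponce commutator estimate then bounds
\[
\| [\partial_x^{s+1}, c^2]\, u_x \|_{L^2} \leq C \left( \| \partial_x c^2 \|_{L^\infty} \| u \|_{H^{s+1}} + \| \partial_x^{s+1} c^2 \|_{L^2} \| u_x \|_{L^\infty} \right),
\]
and since $\| \partial_x c^2 \|_{L^\infty} \leq C \| u \|_{L^\infty} \| u_x \|_{L^\infty}$ and $\| \partial_x^{s+1} c^2 \|_{L^2} \leq C \| u \|_{L^\infty} \| u \|_{H^{s+1}}$ by Moser's inequality, only the $L^\infty$ norms in (\ref{Kato-criterion}) enter the rate constant. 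The remainder $\partial_x^s(6 u u_x^2)$ and the lower-order contributions $j < s$ are handled analogously with strictly fewer derivatives.

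Collecting these bounds yields a differential inequality of the form
\[
\frac{d}{dt} \mathcal{E}_s(t) \leq C\!\left( \| u \|_{L^\infty}, \| u_t \|_{L^\infty}, \| u_x \|_{L^\infty} \right) \mathcal{E}_s(t) ,
\]
so that Gronwall's inequality keeps $\| u(t) \|_{H^{s+1}} + \| u_t(t) \|_{H^s}$ finite on $[0,t_0]$ whenever the bounds (\ref{Kato-criterion}) hold there. Finally I would invoke the blow-up alternative attached to Proposition \ref{lemma-existence-Maxwell}: if the maximal existence time were to coincide with $t_0$, the $H^{s+1} \times H^s$ norm would necessarily diverge as $t \to t_0$, contradicting the Gronwall bound. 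By the $s$-independence of the maximal time in Proposition \ref{lemma-Kato-blowup}, this conclusion is insensitive to the chosen regularity, so the local solution of Proposition \ref{lemma-existence-Maxwell} extends over all of $[0,t_0]$, which is the asserted continuation criterion.
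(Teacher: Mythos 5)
Your proposal is correct and rests on the same core idea as the paper's proof: a weighted energy whose principal coefficient $c^2 = 1 - 3u^2$ is kept coercive by the first bound in (\ref{Kato-criterion}), a Gronwall inequality whose rate constant involves only $\| u \|_{L^\infty}$, $\| u_t \|_{L^\infty}$, $\| u_x \|_{L^\infty}$, and the $s$-independence of the maximal time from Proposition \ref{lemma-Kato-blowup}. The difference is in the execution. The paper uses Proposition \ref{lemma-Kato-blowup} to reduce once and for all to the integer case $s = 2$ and writes out three explicit energies $E_1, E_2, E_3$, obtained by multiplying the equation (differentiated zero, one, and two times in $x$) by $u_t$, $u_{tx}$, $u_{txx}$; the only term not directly controlled by the $L^\infty$ quantities, namely $\int_{\R} u\, u_{xx}^2\, u_{txx}\, dx$, is absorbed via a Gagliardo--Nirenberg interpolation using the already-controlled $E_2$. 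You instead keep $s$ general and invoke Kato--Ponce commutator and Moser estimates after rewriting $c^2 u_{xx} = \partial_x(c^2 u_x) - (\partial_x c^2)\, u_x$; this is a clean and standard alternative that avoids the somewhat ad hoc interpolation step, at the price of fractional-calculus machinery that the paper's reduction renders unnecessary. One technical point you gloss over and the paper treats explicitly: at top order the solution lies only in $H^{s+1} \times H^s$, so applying $\partial_x^s$ (or $J^s$) to the equation and integrating by parts against $\partial_x^s u_t$ is not literally justified for the solution itself. The paper handles this by approximating the initial data in $H^4 \times H^3$, running the top-order estimate on the smoother approximating solutions, and passing to the limit by continuous dependence; your argument needs the analogous mollification step to be fully rigorous.
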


\begin{proof}
We will prove that the local solution of Proposition \ref{lemma-existence-Maxwell}
does not blow up in the $H^s$ norm on the time interval $[0,t_0]$ if
\begin{equation}
\label{contradiction}
M_0 < \frac{1}{\sqrt{3}} \quad \mbox{\rm and} \quad M_1 + M_2 < \infty,
\end{equation}
where
$$
M_0 = \sup_{t \in [0,t_0]} \| u(t,\cdot) \|_{L^{\infty}}, \quad
M_1 =  \sup_{t \in [0,t_0]}\| u_t(t,\cdot) \|_{L^{\infty}}, \quad
M_2 = \sup_{t \in [0,t_0]} \| u_x(t,\cdot) \|_{L^{\infty}}.
$$
Because of the independence of the blow-up time from the index $s$ in Proposition \ref{lemma-Kato-blowup},
it suffices to consider the simplest $H^{s+1}$ norm for $u$ with $s = 2 > \frac{3}{2}$.

Let us define the sequence of energies for the quasilinear Klein--Gordon equation (\ref{Maxwell}),
\begin{eqnarray}
E_1(u) & = & \int_{\R} (u^2 + u_t^2 + u_x^2 (1 - 3 u^2)) dx, \\
E_2(u) & = & \int_{\R} (u_x^2 + u_{tx}^2 + u_{xx}^2 (1 - 3 u^2)) dx, \\
E_3(u) & = & \int_{\R} (u_{xx}^2 + u_{txx}^2 + u_{xxx}^2 (1 - 3 u^2)) dx.
\end{eqnarray}

Multiplying equation (\ref{Maxwell}) by $u_t$, we obtain the energy balance equation,
\begin{equation}
\frac{1}{2} \frac{d E_1(u)}{d t} = - 3 \int_{\R} u u_t u_x^2 dx, \quad t \in [0,t_0],
\end{equation}
where the decay of $u, u_t, u_x$ to $0$ as $|x| \to \infty$ is used. This decay is justified
for any local solution of Proposition \ref{lemma-existence-Maxwell}. Under the assumption (\ref{contradiction}),
there is $C(M_0) > 0$ such that
\begin{equation}
\left| \frac{d E_1(u)}{d t} \right| \leq C(M_0) M_0 M_1 E_1(u) \quad \Rightarrow
\quad E_1(u) \leq E_1(u_0) e^{C(M_0) M_0 M_1 t}, \quad t \in [0,t_0].
\end{equation}
Therefore, $E_1(u)$ cannot blow up on the time interval $[0,t_0]$ if
$M_0 < \frac{1}{\sqrt{3}}$ and $M_1 < \infty$.

Differentiating equation (\ref{Maxwell}) in $x$ and multiplying the resulting equation by $u_{tx}$, we obtain the energy balance equation,
\begin{equation}
\frac{1}{2} \frac{d E_2(u)}{d t} = - 3 \int_{\R} u u_t u_{xx}^2 dx - 6 \int_{\R} u_x^3 u_{tx} dx - 12 \int_{\R} u u_x u_{xx} u_{tx} dx, \quad t \in [0,t_0],
\end{equation}
where the decay of $u_{tx}, u_{xx}$ to $0$ as $|x| \to \infty$ is used.
Again, this decay is justified for any local solution of
Proposition \ref{lemma-existence-Maxwell}. Under the assumption (\ref{contradiction}),
there is $C(M_0) > 0$ such that
\begin{eqnarray}
\nonumber
& \phantom{t} & \left| \frac{d E_2(u)}{d t} \right| \leq C(M_0) (M_0 M_1 + 2 M_2^2 + 4 M_0 M_2) E_2(u) \\
& \phantom{t} & \phantom{texttext} \Rightarrow \quad
E_2(u) \leq E_2(u_0) e^{C(M_0) (M_0 M_1 + 2 M_2^2 + 4 M_0 M_2) t}, \quad t \in [0,t_0].
\end{eqnarray}
Therefore, $E_2(u)$ cannot blow up on the time interval $[0,t_0]$ if
$M_0 < \frac{1}{\sqrt{3}}$ and $M_1 + M_2 < \infty$.

We need one more computation for $E_3(u)$ to control the $H^3$ norm for solution $u$. However,
because of the integration over $x \in \R$, we can not work directly with the local solution $u$
and need the approximating sequence $\{ u^{(n)} \}_{n \in \mathbb{N}}$ of local solutions in Sobolev
space of a higher index $s = 3 > \frac{3}{2}$. Applying density arguments and continuous dependence
from initial data, we approximate the initial value $u_0 \in H^3(\R)$ and $v_0 \in H^2(\R)$ by functions
$u_0^{(n)} \in H^4(\R)$ and $v_0^{(n)} \in H^3(\R)$, such that $u_0^{(n)} \to u_0$ in $H^3$ and
$v_0^{(n)} \to v_0$ in $H^2$ as $n \to \infty$.
The approximating sequence $\{ u^{(n)} \}_{n \in \mathbb{N}}$ of
local solutions of the quasilinear Klein--Gordon equation (\ref{Maxwell}) is generated by the sequence
of the initial data $\{ u_0^{(n)}\}_{n \in \mathbb{N}}$ and $\{v_0^{(n)}\}_{n \in \mathbb{N}}$.

Differentiating equation (\ref{Maxwell}) twice in $x$ and multiplying the resulting
equation by $u_{txx}$, we obtain the energy balance equation,
\begin{eqnarray}
\nonumber
\frac{1}{2} \frac{d E_3(u^{(n)})}{d t} & = & - 3 \int_{\R} u^{(n)} u^{(n)}_t  (u^{(n)}_{xxx})^2 dx
- 36 \int_{\R}  (u^{(n)}_x)^2  u^{(n)}_{xx}  u^{(n)}_{txx} dx - 18 \int_{\R}  u^{(n)}  u^{(n)}_x  u^{(n)}_{xxx}  u^{(n)}_{txx} dx \\
& \phantom{t} & - 18 \int_{\R}  u^{(n)}  (u^{(n)}_{xx})^2  u^{(n)}_{txx} dx, \qquad
\qquad \qquad t \in [0,t_0],\label{last-term}
\end{eqnarray}
where the decay of $ u^{(n)}_{txx},  u^{(n)}_{xxx}$ to $0$ as $|x| \to \infty$ is used.
 This decay is justified for the approximating sequence
$\{ u^{(n)} \}_{n \in \mathbb{N}}$ of local solutions of Proposition
\ref{lemma-existence-Maxwell} with $s = 3$.
Under the assumption (\ref{contradiction}) for
the approximating sequence
$\{ u^{(n)} \}_{n \in \mathbb{N}}$ rewritten as $M_0^{(n)} < \frac{1}{\sqrt{3}}$ and
$M_1^{(n)} + M_2^{(n)} < \infty$, there is $C(M_0^{(n)}) > 0$ such that
\begin{equation}
\left| \frac{d E_3(u^{(n)})}{d t} \right| \leq C(M_0^{(n)}) (M_0^{(n)} M_1^{(n)} + 12 (M_2^{(n)})^2 +
9 M_0^{(n)} M_2^{(n)} + 9 M_0^{(n)} E_2^{1/2}(u^{(n)})) E_3(u^{(n)}),
\end{equation}
where the Gagliardo--Nirenberg inequality is used to estimate the last term of (\ref{last-term}),
\begin{eqnarray*}
\left| \int_{\R}  u^{(n)}  (u^{(n)}_{xx})^2  u^{(n)}_{txx} dx \right| & \leq &
M_0^{(n)} \| u^{(n)}_{txx} \|_{L^2} \| u^{(n)}_{xx} \|_{L^4}^2 \\
& \leq & M_0^{(n)} \| u_{txx}^{(n)} \|_{L^2} \| u_{xxx}^{(n)} \|_{L^2}^{1/2} \| u_{xx} \|_{L^2}^{3/2} \\
& \leq & C(M_0^{(n)}) E_2^{1/2}(u^{(n)}) E_3(u^{(n)}).
\end{eqnarray*}
Since $E_3(u^{(n)}_0) \to E_3(u_0)$ as $n \to \infty$, we infer from the continuous dependence
of the local solution $u$ on initial data $u_0$ that
$E_3(u)$ cannot blow up on the time interval $[0,t_0]$ if
$M_0 < \frac{1}{\sqrt{3}}$ and $M_1 + M_2 < \infty$.
\end{proof}

\begin{remark}
The continuation criterion of Lemma \ref{lemma-Kato-criterion} will allow us to prove
the estimates for the approximation in the $H^2$-norm and to avoid energy
estimates in higher Sobolev spaces. In addition, it will allow us to extend the local solution of the quasilinear
Klein--Gordon equation (\ref{Maxwell}) to the times $t = \mathcal{O}(1/\epsilon)$
or $T = \mathcal{O}(1)$ required for the justification result in Theorem \ref{theorem-main}.
\end{remark}

The results of Proposition \ref{lemma-existence-Maxwell} and Lemma \ref{lemma-Kato-criterion}
can now be rewritten for the equivalent quasilinear Klein--Gordon equation (\ref{Maxwell-short})
in new variables (\ref{new-variables}).

\begin{corollary}
\label{corollary-existence-Maxwell}
Fix $s > \frac{3}{2}$ and $C_0 > 0$ independently of $\epsilon$.
For any $U_0 \in H^{s+1}(\R)$ and $V_0 \in H^s(\R)$
such that $\| U_0 \|_{L^{\infty}} \leq C_0$ and for all small
nonzero $\epsilon$, there exists an $\epsilon$-independent
time $T = T(\| U_0 \|_{H^{s+1}}+\|V_0\|_{H^s}) > 0$ and
a unique strong solution of the quasilinear Klein--Gordon equation (\ref{Maxwell-short})
such that
\begin{equation}
\label{Kato-solution-short}
U(\tau,\cdot) \in C([0,\epsilon T],H^{s+1}(\R)) \cap C^1([0,\epsilon T],H^s(\R)) \cap C^2([0,\epsilon T],H^{s-1}(\R)),
\end{equation}
subject to the initial data $U(0,\cdot) = U_0$ and $U_{\tau}(0,\cdot) = V_0$. Moreover,
for all small nonzero $\epsilon$, the local solution
is continued on the time interval $[0,\tau_0]$ for a $\tau_0 > 0$ as long as
there is an $\epsilon$-independent positive constant $C$ such that
\begin{equation}
\label{Kato-criterion-short}
\sup_{\tau \in [0,\tau_0]} \| U(\tau,\cdot) \|_{L^{\infty}} \leq C \quad
\mbox{\rm and} \quad \sup_{\tau \in [0,\tau_0]} \left( \| U_{\tau}(\tau,\cdot) \|_{L^{\infty}} +
\| U_{\xi}(\tau,\cdot) \|_{L^{\infty}} \right) < \infty.
\end{equation}
\end{corollary}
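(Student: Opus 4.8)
The plan is to transfer Proposition~\ref{lemma-existence-Maxwell} and Lemma~\ref{lemma-Kato-criterion} through the invertible change of variables~(\ref{new-variables}), which puts solutions of~(\ref{Maxwell}) on a $t$-interval $[0,T]$ in one-to-one correspondence with solutions of~(\ref{Maxwell-short}) on the $\tau$-interval $[0,\epsilon T]$. First I would record the scaling relations that follow from~(\ref{new-variables}). Evaluating at $t=0$ (so that $\tau=0$ and $\xi=x/(2\epsilon)$) gives the initial data $u_0(x)=2\epsilon U_0(x/(2\epsilon))$ and $v_0(x)=2\epsilon^2 V_0(x/(2\epsilon))-U_0'(x/(2\epsilon))$ for~(\ref{Maxwell}); since the spatial rescaling is an isomorphism of $H^m(\R)$ for every $m$, the hypotheses $U_0\in H^{s+1}(\R)$ and $V_0\in H^s(\R)$ give $u_0\in H^{s+1}(\R)$ and $v_0\in H^s(\R)$ with finite (though $\epsilon$-dependent) norms, while the smallness $\|u_0\|_{L^\infty}=2\epsilon\|U_0\|_{L^\infty}\le 2\epsilon C_0$ drops below $1/\sqrt{3}$ for all small $\epsilon$. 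Proposition~\ref{lemma-existence-Maxwell} then yields a unique local solution $u$ in the class~(\ref{Kato-solution}), and pulling it back through~(\ref{new-variables}) produces $U$ in the class~(\ref{Kato-solution-short}); uniqueness and continuous dependence transfer verbatim because the change of variables is a bijection respecting the Sobolev structure.

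The continuation (\emph{moreover}) part I would obtain by rewriting the criterion~(\ref{Kato-criterion}) in the scaled variables. Differentiating~(\ref{new-variables}) gives $u_x=U_\xi$ and $u_t=2\epsilon^2 U_\tau-U_\xi$, whence $\|u\|_{L^\infty}=2\epsilon\|U\|_{L^\infty}$, $\|u_x\|_{L^\infty}=\|U_\xi\|_{L^\infty}$ and $\|u_t\|_{L^\infty}\le 2\epsilon^2\|U_\tau\|_{L^\infty}+\|U_\xi\|_{L^\infty}$. For small $\epsilon$ the bound $\|U\|_{L^\infty}\le C$ therefore forces $\|u\|_{L^\infty}<1/\sqrt{3}$, and $\|U_\tau\|_{L^\infty}+\|U_\xi\|_{L^\infty}<\infty$ is equivalent to $\|u_t\|_{L^\infty}+\|u_x\|_{L^\infty}<\infty$, so~(\ref{Kato-criterion-short}) is exactly the image of~(\ref{Kato-criterion}) and Lemma~\ref{lemma-Kato-criterion} delivers the asserted continuation on $[0,\tau_0]$.

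The main obstacle is the claim that the existence time $T$ may be taken independent of $\epsilon$, depending only on the scaled norm $\|U_0\|_{H^{s+1}}+\|V_0\|_{H^s}$. This does not follow from a naive transfer: the Kato time of Proposition~\ref{lemma-existence-Maxwell} is governed by $\|u_0\|_{H^{s+1}}+\|v_0\|_{H^s}$, and these diverge as $\epsilon\to 0$ (each $x$-derivative of $u_0$ carries a factor $(2\epsilon)^{-1}$), so the bare existence time shrinks to zero. To get around this I would not use the Kato time to reach $[0,T]$ but instead extend the short-time solution by the continuation criterion, controlling $M_0=\|u\|_{L^\infty}$, $M_1=\|u_t\|_{L^\infty}$, $M_2=\|u_x\|_{L^\infty}$ through a bootstrap built on the energy balances for $E_1$, $E_2$, $E_3$ from the proof of Lemma~\ref{lemma-Kato-criterion}. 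The decisive point is the $\epsilon$-weighting of the growth rates: because $u=2\epsilon U$ is $L^\infty$-small one has $M_0=\mathcal{O}(\epsilon)$ while $M_1,M_2=\mathcal{O}(1)$, so the rate $M_0 M_1$ controlling $E_1$ is $\mathcal{O}(\epsilon)$, the rate controlling $E_2$ reduces to the $\epsilon$-independent $2M_2^2$ up to $\mathcal{O}(\epsilon)$ corrections, and in the rate controlling $E_3$ the only potentially large cross term $M_0 E_2^{1/2}$ is in fact $\mathcal{O}(\epsilon^{1/2})$ since $E_2=\mathcal{O}(\epsilon^{-1})$. Consequently the growth factors $E_j(t)/E_j(0)$ over $[0,T]$ are bounded $\epsilon$-independently, and since Sobolev embedding in the scaled variables bounds $M_0,M_1,M_2$ by $\epsilon$-weighted powers of $E_1,E_2$ whose weights cancel those of the energies, the bootstrap closes on a $t$-interval whose length depends only on $\|U_0\|_{H^{s+1}}+\|V_0\|_{H^s}$. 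Then $\|u\|_{L^\infty}$ stays below $1/\sqrt{3}$ and $M_1+M_2<\infty$ throughout, so the continuation criterion carries the solution up to the $\epsilon$-independent time $T$, i.e.\ $\tau\in[0,\epsilon T]$. I expect the most delicate step to be checking that this bootstrap closes uniformly in $\epsilon$ — in particular, tracking the $\epsilon$-weights in the equivalences between the original-variable energies $E_1,E_2,E_3$ and the scaled Sobolev norms of $U$, and verifying that no rate acquires an unfavorable negative power of $\epsilon$.
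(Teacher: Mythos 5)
Your proposal is correct, and its first two paragraphs coincide with the paper's entire proof, which consists of the single sentence that the result ``follows by the transformation of variables (\ref{new-variables})'': the transfer of initial data, the identities $u_x=U_\xi$, $u_t=2\epsilon^2U_\tau-U_\xi$, the observation that $\|u_0\|_{L^\infty}=2\epsilon\|U_0\|_{L^\infty}<1/\sqrt{3}$ for small $\epsilon$, and the verbatim translation of the continuation criterion (\ref{Kato-criterion}) into (\ref{Kato-criterion-short}) are exactly what that sentence suppresses. Where you genuinely go beyond the paper is the third paragraph: you correctly observe that the $\epsilon$-independence of $T$ does \emph{not} follow from a naive application of Proposition \ref{lemma-existence-Maxwell}, because $\|u_0\|_{H^{s+1}}+\|v_0\|_{H^s}=\mathcal{O}(\epsilon^{-(2s-1)/2})$ under the scaling, so the Kato time degenerates as $\epsilon\to 0$. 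The paper leaves this point implicit; your bootstrap on $E_1,E_2,E_3$ with the weights $E_1=\mathcal{O}(\epsilon)$, $E_2=\mathcal{O}(\epsilon^{-1})$, $E_3=\mathcal{O}(\epsilon^{-3})$, $M_0=\mathcal{O}(\epsilon)$, $M_1,M_2=\mathcal{O}(1)$ is the right way to close it, and I checked that the exponents do cancel as you claim (in particular $M_0E_2^{1/2}=\mathcal{O}(\epsilon^{1/2})$ and $\|u_x\|_{L^\infty}\le C\|u_x\|_{L^2}^{1/2}\|u_{xx}\|_{L^2}^{1/2}=\mathcal{O}(\epsilon^{1/4}\cdot\epsilon^{-1/4})$). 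Note that in the paper's overall logic the uniform initial time is not actually load-bearing --- Section 5 only needs existence on \emph{some} interval plus the continuation criterion, and then re-extends to $[0,T]$ via the energy estimates for the error term $R$ --- but since the corollary asserts the $\epsilon$-independent $T$, your argument fills a real gap rather than taking a detour. The one thing worth making explicit if you write this up is the standard continuity/open-closed argument that launches the bootstrap from the (short, $\epsilon$-dependent) Kato interval.
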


\begin{proof}
The result follows by the transformation of variables (\ref{new-variables}).
\end{proof}

\section{Energy estimates for the error term}

To continue with the justification analysis, we decompose a solution of 
the quasilinear Klein--Gordon equation (\ref{Maxwell-short})
in the form $U = A + \epsilon R$, where
$A$ is a solution of the short-pulse equation (\ref{short-pulse}) 
satisfying condition (\ref{condition-solution}) of Theorem \ref{theorem-main} 
and $R$ is the error term satisfying
\begin{equation}
\label{remainder}
R_{\xi \tau} = R + \epsilon^2 R_{\tau \tau}
+ \left( 3 A^2 R + 3 \epsilon A R^2 + \epsilon^2 R^3 \right)_{\xi \xi} + \epsilon A_{\tau\tau}.
\end{equation}
We shall now control solutions of this error equation by using apriori energy estimates.
The energy for the error term is defined by
\begin{equation}
\label{energy-R}
E = \int_{\R} \left( R^2 + R_{\xi}^2 + R_{\xi \xi}^2 + 2 \epsilon^2 R_{\tau}^2 + \epsilon^4 R_{\tau \tau}^2 \right) dx.
\end{equation}

By Sobolev embedding, the energy space $E < \infty$ is embedded into the space of continuously differentiable
functions in $\xi$ on $\R$, which are decaying to zero at infinity as $|\xi| \to \infty$ and are bounded by
\begin{equation}
\| R \|_{L^{\infty}} + \| R_{\xi} \|_{L^{\infty}} \leq C E^{1/2}
\end{equation}
In addition, if $A$ is a strong solution of the short-pulse equation from Proposition \ref{lemma-existence-short}
satisfying property (\ref{property-second}) of Corollary \ref{corollary-second-antider}
and $E < \infty$, then $R_{\xi \tau}$ defined by equation (\ref{remainder}) belongs to $L^2(\R)$ such that
\begin{eqnarray*}
\| R_{\xi \tau} \|_{L^2} & \leq & \epsilon \|A_{\tau \tau} \|_{L^2} + \| R \|_{L^2} + \epsilon^2 \|R_{\tau \tau} \|_{L^2}
+ 3 (\| A \|_{L^{\infty}} + \epsilon \| R \|_{L^{\infty}})^2 \| R_{\xi \xi} \|_{L^2} \\
& \phantom{t} & + 6 \epsilon (\| A \|_{L^{\infty}} + \epsilon \| R \|_{L^{\infty}}) \| R_{\xi} \|^2_{L^2}
+ 12 (\| A \|_{L^{\infty}} + \epsilon \| R \|_{L^{\infty}}) \| A_{\xi} \|_{L^2} \| R_{\xi} \|_{L^2} \\
& \phantom{t} & + 3 (2 \|A \|_{L^{\infty}} \| R \|_{L^{\infty}} + \epsilon \| R \|^2_{L^{\infty}} ) \| A_{\xi \xi} \|_{L^2}
+ 6 \|R \|_{L^{\infty}} \|A_{\xi} \|_{L^2}^2.
\end{eqnarray*}
The previous lengthy estimate can be greatly simplified if $R$ belongs to the energy space
(\ref{energy-R}) and $A$ satisfies condition (\ref{condition-solution}) of Theorem \ref{theorem-main}. In
this case, for sufficiently small $\epsilon > 0$, we write
\begin{equation}
\label{bound-R-xi-tau}
\| R_{\xi \tau} \|_{L^2} \leq C \left( \delta \epsilon + E^{1/2} +
\delta^2 E^{1/2} + \delta \epsilon E + \epsilon^2 E^{3/2}  \right),
\end{equation}
where $C$ is a generic $\epsilon$-independent positive constant, which may change from
one line to another line and from one term to another term in the same inequality.

By Sobolev's embedding, (\ref{energy-R}) and (\ref{bound-R-xi-tau}) yield
the control of $R_{\tau}$ in $L^{\infty}$ norm with the bound
\begin{equation}
\label{bound-R-tau-infty}
\| \epsilon R_{\tau} \|_{L^{\infty}} \leq C \left( \| \epsilon R_{\tau} \|_{L^2}
+ \epsilon \| R_{\xi \tau} \|_{L^2} \right) \leq C \left( E^{1/2} + \delta \epsilon^2
+ \delta \epsilon^2 E + \epsilon^3 E^{3/2} \right),
\end{equation}
where we have used that $(1 + \epsilon + \epsilon \delta^2) E^{1/2} \leq C E^{1/2}$.
Note that $R_{\tau}$ is a continuous function of $\xi$, which decays to zero
at infinity as $|\xi| \to \infty$.

The main result of this section is the following lemma.

\begin{lemma}
\label{lemma-remainder}
Under the assumptions of Theorem \ref{theorem-main}, the rate of change of
the energy (\ref{energy-R}) is given by
\begin{equation}
\label{balance-energy}
\frac{d}{d \tau} \left( E + \tilde{E} \right) = J,
\end{equation}
where $E$ is given by (\ref{energy-R}), $\tilde{E}$ is given by
\begin{eqnarray*}
\tilde{E} & = & \int_{\R}  \left( - 2 \epsilon^2 R_{\xi} R_{\tau} - 3 A^2 R_{\xi}^2
- 6 \epsilon A R R_{\xi}^2 - 3 \epsilon^2 R^2 R_{\xi}^2 \right) d \xi \\
& \phantom{t} & + \int_{\R}  \left( - 2 \epsilon^2 R_{\xi \xi} R_{\xi \tau}
- 3 \epsilon^2 A^2 R_{\xi \tau}^2 + 6 \epsilon^2 (A A_{\xi})_{\xi} R_{\tau}^2 - 6 \epsilon^3 A R R_{\xi \tau}^2
- 3 \epsilon^4 R^2 R_{\xi \tau}^2 \right) d \xi
\end{eqnarray*}
and $J$ is given by the sum of the right-hand-sides of (\ref{balance-H1-norm}) and (\ref{balance-H2-norm}) below.
Moreover, for sufficiently small $\delta > 0$ and $\epsilon > 0$, there is
an ($\delta$,$\epsilon$)-independent constant $C > 0$ such that
\begin{eqnarray}
\label{balance-1}
|\tilde{E}| & \leq & C \left( \epsilon E + \delta^2 E + \delta \epsilon E^{3/2} + \epsilon^2 E^2 \right), \\
\label{balance-2}
|J| & \leq & C \left( \delta E^{1/2} + \delta^2 E + \delta E^{3/2} + \epsilon E^2 \right),
\end{eqnarray}
as long as the solution remains in the function space
\begin{equation}
\label{Kato-solution-long}
R \in C([0,T],H^3(\R)) \cap C^1([0,T],H^2(\R)) \cap C^2([0,T],H^1(\R)).
\end{equation}
\end{lemma}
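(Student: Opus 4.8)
The plan is to derive the balance (\ref{balance-energy}) by differentiating the energy (\ref{energy-R}) in $\tau$, replacing every occurrence of $R_{\xi\tau}$ (and of its $\xi$- and $\tau$-derivatives) through the error equation (\ref{remainder}), and then separating the resulting terms into those that are perfect $\tau$-derivatives --- to be moved to the left and collected in $\tilde{E}$ --- and the remainder, which becomes $J$. Concretely, I would organise the computation into two energy identities. The \emph{$H^1$-norm balance} (\ref{balance-H1-norm}) is obtained by multiplying (\ref{remainder}) by $2R_\tau$ and by $2R_\xi$, and by multiplying the $\tau$-differentiated equation by $2\epsilon^2 R_{\tau\tau}$; after integration over $\R$ and use of the decay of $R$ and its derivatives at infinity, the left-hand sides assemble into
\[
\frac{d}{d\tau}\int_\R \left( R^2 + R_\xi^2 + 2\epsilon^2 R_\tau^2 + \epsilon^4 R_{\tau\tau}^2 \right) d\xi .
\]
The \emph{$H^2$-norm balance} (\ref{balance-H2-norm}) is obtained by differentiating (\ref{remainder}) once in $\xi$ and multiplying by $2R_{\xi\xi}$, producing $\frac{d}{d\tau}\int_\R R_{\xi\xi}^2\, d\xi$. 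Adding the two gives $\frac{d}{d\tau}E$ on the left, up to the correction terms described next.

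In both balances the linear term $R$ and the wave term $\epsilon^2 R_{\tau\tau}$ are handled directly: $2\int R_\xi R\,d\xi = 0$, while the mixed products generated by $\epsilon^2 R_{\tau\tau}$ are total $\tau$-derivatives, e.g. $2\epsilon^2\int R_\xi R_{\tau\tau}\,d\xi = 2\epsilon^2\frac{d}{d\tau}\int R_\xi R_\tau\,d\xi$, which supplies the correction $-2\epsilon^2 R_\xi R_\tau$ in $\tilde{E}$, and analogously $-2\epsilon^2 R_{\xi\xi}R_{\xi\tau}$. The nonlinear term $(3A^2R+3\epsilon A R^2+\epsilon^2 R^3)_{\xi\xi}$ is the source of all remaining entries of $\tilde{E}$ and $J$. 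After one integration by parts, each of its three pieces pairs a factor $R_{\tau\xi}$ (or $R_{\tau\tau\xi}$) with $R_\xi$ (or $R_{\xi\tau}$), which can be written as a $\tau$-derivative of a square; for instance $6\int A^2 R_\xi R_{\tau\xi}\,d\xi = 3\frac{d}{d\tau}\int A^2 R_\xi^2\,d\xi - 3\int (A^2)_\tau R_\xi^2\,d\xi$. The total-derivative halves produce exactly the coefficient-weighted squares $-3A^2R_\xi^2$, $-6\epsilon ARR_\xi^2$, $-3\epsilon^2 R^2 R_\xi^2$ (and their $R_{\xi\tau}$ analogues from the $\tau$-differentiated balance) that make up $\tilde{E}$, while the residual halves, in which a derivative falls onto the coefficient $A$ or onto $R$, are assigned to $J$.

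The step I expect to be the main obstacle is the control of the \emph{top-order quasilinear terms}, i.e. the loss of derivatives. The $H^2$-norm balance requires $(3A^2R+3\epsilon A R^2+\epsilon^2 R^3)_{\xi\xi\xi}$, whose leading part is $3A^2 R_{\xi\xi\xi}$; since the energy only controls $R$ in $H^2$, the quantity $R_{\xi\xi\xi}$ is \emph{not} in the energy space, and the product $6\int A^2 R_{\xi\xi}R_{\xi\xi\xi}\,d\xi$ can be made sense of only by exploiting the symmetric structure, integrating by parts to obtain $-3\int (A^2)_\xi R_{\xi\xi}^2\,d\xi$, which is controlled and feeds $J$ with the admissible size $\delta^2 E$. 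The same manoeuvre applied in the $\tau$-differentiated balance turns $\int A^2 R_{\tau\tau}R_{\tau\xi\xi}\,d\xi$ into the total $\tau$-derivative $3\epsilon^2\frac{d}{d\tau}\int A^2 R_{\xi\tau}^2\,d\xi$, which is responsible for the term $-3\epsilon^2 A^2 R_{\xi\tau}^2$ in $\tilde{E}$ together with the companion coefficient term $6\epsilon^2(AA_\xi)_\xi R_\tau^2$. This cancellation of the highest derivative against the smoothness of the small coefficient $A^2$ is the heart of the lemma; the regularity assumption (\ref{Kato-solution-long}), which puts $R$ in $H^3$, is exactly what legitimises all these integrations by parts and the differentiation of the energy under the integral sign.

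Finally, with the identity (\ref{balance-energy}) in hand, I would prove the bounds (\ref{balance-1}) and (\ref{balance-2}) by estimating $\tilde{E}$ and $J$ term by term. The ingredients are: $\|A\|_{L^\infty}$ and all the $\xi$- and $\tau$-derivatives of $A$ appearing here are $\mathcal{O}(\delta)$ by condition (\ref{condition-solution}) and Sobolev embedding; $\|R\|_{L^\infty}+\|R_\xi\|_{L^\infty}\le CE^{1/2}$; the energy (\ref{energy-R}) directly controls $\|R\|_{H^2}$, $\|\epsilon R_\tau\|_{L^2}$ and $\|\epsilon^2 R_{\tau\tau}\|_{L^2}$ by $E^{1/2}$; and $\|\epsilon R_{\xi\tau}\|_{L^2}$, $\|\epsilon R_\tau\|_{L^\infty}$ are controlled through the auxiliary bounds (\ref{bound-R-xi-tau}) and (\ref{bound-R-tau-infty}). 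Grouping by the number of $A$- and $R$-factors, the bilinear corrections with one surviving power of $\epsilon$ give the $\epsilon E$ contribution to $|\tilde{E}|$, the two-$A$ terms give $\delta^2 E$, the cubic (one-$A$) terms give $\delta\epsilon E^{3/2}$, and the quartic terms give $\epsilon^2 E^2$; similarly for $J$ the linear-in-$R$ forcing terms built from $\epsilon A_{\tau\tau}$ give $\delta E^{1/2}$, the two-$A$ residuals give $\delta^2 E$, the cubic terms give $\delta E^{3/2}$ (absorbing a harmless factor $\epsilon\le 1$), and the quartic terms give $\epsilon E^2$. These are all routine applications of H\"older's inequality, Sobolev embedding, and the Gagliardo--Nirenberg inequality, so the only genuinely delicate part remains the derivative-counting and the cancellations of the previous paragraph.
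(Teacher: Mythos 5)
Your proposal follows essentially the same route as the paper: the authors also derive the identity by multiplying (\ref{remainder}) by $R_\xi$ and $R_\tau$, the $\xi$-differentiated equation by $R_{\xi\xi}$, and the $\tau$-differentiated equation by $\epsilon^2 R_{\tau\tau}$, integrate over $\R$ using decay at infinity, collect the perfect $\tau$-derivatives (including the quasilinear squares $-3A^2R_\xi^2$, $-3\epsilon^2A^2R_{\xi\tau}^2$, $6\epsilon^2(AA_\xi)_\xi R_\tau^2$, etc.) into $\tilde E$, and estimate the residual $J$ term by term with (\ref{condition-solution}), (\ref{bound-R-xi-tau}), and (\ref{bound-R-tau-infty}). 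The only difference is cosmetic bookkeeping (you assign the $\epsilon^2R_{\tau\tau}$-multiplier identity to the $H^1$ balance rather than to (\ref{balance-H2-norm})), and you correctly identify the symmetric integration by parts that removes the top-order terms $R_{\xi\xi\xi}$ and $R_{\tau\tau\xi}$ as the crux of the computation.
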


The proof of Lemma \ref{lemma-remainder} is based on a number of elementary but lengthy computations.
Multiplying equation (\ref{remainder}) by $R_{\xi}$, we derive the first balance equation,
\begin{eqnarray}
\nonumber
& \phantom{t} &  \frac{\partial}{\partial \tau} \left( -\frac{1}{2} R_{\xi}^2 + \epsilon^2 R_{\xi} R_{\tau} \right) \\
\nonumber & \phantom{t} &
+ \frac{\partial}{\partial \xi} \left( \frac{1}{2} R^2 - \frac{1}{2} \epsilon^2 R_{\tau}^2
+ \frac{3}{2} A^2 R_{\xi}^2 + \frac{3}{2} (A^2)_{\xi \xi} R^2 + \epsilon A_{\xi \xi} R^3
+ 3 \epsilon A R R_{\xi}^2 + \frac{3}{2} \epsilon^2 R^2 R^2_{\xi} \right) \\ \label{balance-xi}
& = & - \epsilon R_{\xi} A_{\tau \tau} - 3 A A_{\xi} R_{\xi}^2 + \frac{3}{2} (A^2)_{\xi \xi \xi} R^2
+ \epsilon A_{\xi \xi \xi} R^3 - 9 \epsilon A_{\xi} R R_{\xi}^2 - 3 \epsilon A R_{\xi}^3
- 3 \epsilon^2 R R_{\xi}^3.
\end{eqnarray}
Multiplying equation (\ref{remainder}) by $R_{\tau}$, we derive the second balance equation,
\begin{eqnarray}
\nonumber
& \phantom{t} &  \frac{\partial}{\partial \tau} \left( \frac{1}{2} R^2 + \frac{1}{2} \epsilon^2 R^2_{\tau}
- \frac{3}{2} A^2 R_{\xi}^2 - 3 \epsilon A R R_{\xi}^2 - \frac{3}{2} \epsilon^2 R^2 R_{\xi}^2 \right) \\
\nonumber & \phantom{t} &
+ \frac{\partial}{\partial \xi} \left( -\frac{1}{2} R_{\tau}^2 + 3 A^2 R_{\xi} R_{\tau}
+ 6 \epsilon A R R_{\xi} R_{\tau} + 3 \epsilon^2 R^2 R_{\xi} R_{\tau} \right) \\
\nonumber
& = & - \epsilon R_{\tau} A_{\tau \tau} - 3 A A_{\tau} R_{\xi}^2 - 3 (A^2)_{\xi \xi} R R_{\tau}
- 6 A A_{\xi} R_{\xi} R_{\tau} \\ \label{balance-tau}
& \phantom{t} & - 3 \epsilon A_{\xi \xi} R^2 R_{\tau} - 6 \epsilon A_{\xi} R R_{\xi} R_{\tau}
- 3 \epsilon A_{\tau} R R_{\xi}^2 - 3 \epsilon A R_{\xi}^2 R_{\tau} - 3 \epsilon^2 R R_{\xi}^2 R_{\tau}.
\end{eqnarray}

If $R$ belongs to the energy space $E < \infty$, we can integrate
the balance equations (\ref{balance-xi}) and (\ref{balance-tau}) over $\xi$ in $\R$
and use the decay of $R$, $R_{\xi}$, and $R_{\tau}$ to zero at infinity as $\xi \to \infty$.
As a result, we obtain the energy balance equation,
\begin{eqnarray}
\nonumber
& \phantom{t} & \frac{d}{d \tau} \int_{\R}  \left( R^2 + \epsilon^2 R^2_{\tau}
+ R_{\xi}^2 - 2 \epsilon^2 R_{\xi} R_{\tau} - 3  A^2 R_{\xi}^2
- 6 \epsilon A R R_{\xi}^2 - 3 \epsilon^2 R^2 R_{\xi}^2 \right) d \xi \\
\nonumber
& = & 2 \epsilon \int_{\R} (R_{\xi} - R_{\tau}) A_{\tau \tau} d\xi
+ 6 \int_{\R} \left( A A_{\xi} R_{\xi}^2 - (A A_{\xi})_{\xi \xi} R^2 -
A A_{\tau} R_{\xi}^2 + 2 A A_{\xi} R R_{\xi \tau} \right) d \xi \\ \nonumber
& \phantom{t} & + 2 \epsilon \int_{\R} \left(
- A_{\xi \xi \xi} R^3 + 9 A_{\xi} R R_{\xi}^2 + 3 A R_{\xi}^3
- 3 A_{\xi \xi} R^2 R_{\tau} - 6 A_{\xi} R R_{\xi} R_{\tau}
- 3 A_{\tau} R R_{\xi}^2 - 3 A R_{\xi}^2 R_{\tau} \right) d \xi \\
\label{balance-H1-norm}
& \phantom{t} & + 6 \epsilon^2 \int_{\R} R R_{\xi}^2 \left( R_{\xi} - R_{\tau} \right) d\xi,
\end{eqnarray}
where the integration by parts is performed to obtain
$$
\int_{\R} \left( (A^2)_{\xi \xi} R R_{\tau} +  2 A A_{\xi} R_{\xi} R_{\tau} \right) d \xi =
- \int_{R} 2 A A_{\xi} R R_{\tau \xi} d \xi.
$$

We still need estimates of the rate of change of
$\| R_{\xi \xi} \|_{L^2}^2$ and $\| \epsilon^2 R_{\tau \tau} \|_{L^2}^2$.
Taking the derivative of equation (\ref{remainder}) in $\xi$ and
multiplying the resulting equation by $R_{\xi \xi}$, we derive the third balance equation,
\begin{eqnarray}
\nonumber
& \phantom{t} &  \frac{\partial}{\partial \tau} \left( -\frac{1}{2} R_{\xi \xi}^2 + \epsilon^2 R_{\xi \xi} R_{\tau \xi} \right) + \frac{\partial}{\partial \xi} \left( \frac{1}{2} R_{\xi}^2 - \frac{1}{2} \epsilon^2 R_{\tau \xi}^2
+ \frac{3}{2} A^2 R_{\xi \xi}^2 + 3 \epsilon A R R_{\xi \xi}^2
+ \frac{3}{2} \epsilon^2 R^2 R^2_{\xi \xi} + \frac{3}{2} \epsilon^2 R_{\xi}^4 \right) \\ \nonumber
& = & - \epsilon R_{\xi \xi} A_{\tau \tau \xi} - 15 A A_{\xi} R_{\xi \xi}^2
- 18 (A A_{\xi})_{\xi} R_{\xi} R_{\xi \xi} - 6 (A A_{\xi})_{\xi \xi} R R_{\xi \xi}
- 3 \epsilon A_{\xi \xi \xi} R^2 R_{\xi \xi}\\
 \label{balance-xi-xi} &
\phantom{t} & - 18 \epsilon A_{\xi \xi} R R_{\xi} R_{\xi \xi}
- 15 \epsilon A_{\xi} R R_{\xi \xi}^2 - 18 \epsilon A_{\xi} R_{\xi}^2 R_{\xi \xi}
- 15 \epsilon A R_{\xi} R_{\xi \xi}^2 - 15 \epsilon^2 R R_{\xi} R_{\xi \xi}^2.
\end{eqnarray}

Finally, taking the derivative of equation (\ref{remainder}) in $\tau$ and
multiplying the resulting equation by $R_{\tau \tau}$, we derive the last balance equation,
\begin{eqnarray}
\nonumber
& \phantom{t} & \frac{\partial}{\partial \tau} \left( \frac{1}{2} R_{\tau}^2 + \frac{1}{2} \epsilon^2 R^2_{\tau \tau}
- \frac{3}{2} A^2 R_{\xi \tau}^2 + 3 (A A_{\xi})_{\xi} R_{\tau}^2 - 3 \epsilon A R R_{\xi \tau}^2
- \frac{3}{2} \epsilon^2 R^2 R_{\xi \tau}^2 \right) \\
\nonumber &
\phantom{t} & + \frac{\partial}{\partial \xi} \left( -\frac{1}{2} R_{\tau \tau}^2 + 3 A^2 R_{\tau \tau} R_{\xi \tau}
+ 6 \epsilon A R R_{\tau \tau} R_{\xi \tau} + 3 \epsilon^2 R^2 R_{\tau \tau} R_{\xi \tau} \right) \\
\nonumber
& = & - \epsilon R_{\tau \tau} A_{\tau \tau \tau} - 3 A A_{\tau} R_{\xi \tau}^2
- 6 A A_{\xi} R_{\tau \tau} R_{\xi \tau} - 6 A A_{\tau} R_{\tau \tau} R_{\xi \xi}  \\
\nonumber & \phantom{t} &
+ 3 (A A_{\tau})_{\xi \xi} R_{\tau}^2
- 6 (A A_{\tau})_{\xi \xi} R R_{\tau \tau} - 12 (A A_{\tau})_{\xi} R_{\xi} R_{\tau \tau}
- 3 \epsilon A_{\xi \xi \tau} R^2 R_{\tau \tau}\\
\nonumber & \phantom{t} &
- 12 \epsilon A_{\xi \tau} R R_{\xi} R_{\tau \tau} - 6 \epsilon A_{\tau} (R R_{\xi})_{\xi} R_{\tau \tau}
-6 \epsilon A_{\xi \xi} R R_{\tau} R_{\tau \tau} - 12 \epsilon A_{\xi} R_{\xi} R_{\tau} R_{\tau \tau} \\
\nonumber & \phantom{t} &
- 6 \epsilon A_{\xi} R R_{\tau \tau} R_{\xi \tau}
- 3 \epsilon A_{\tau} R R_{\xi \tau}^2 - 6 \epsilon A R_{\tau} R_{\tau \tau} R_{\xi \xi}
- 6 \epsilon A R_{\xi} R_{\tau \tau} R_{\xi \tau} - 3 \epsilon A R_{\tau} R_{\xi \tau}^2 \\
 \label{balance-tau-tau} & \phantom{t} &
- 6 \epsilon^2 R_{\xi}^2 R_{\tau} R_{\tau \tau} - 6 \epsilon^2 R R_{\tau} R_{\xi \xi} R_{\tau \tau}
- 6 \epsilon^2 R R_{\xi} R_{\xi \tau} R_{\tau \tau} - 3 \epsilon^2 R R_{\tau} R_{\xi \tau}^2.
\end{eqnarray}

Let us now assume the decay of $R$, $R_{\xi}$, $R_{\tau}$, $R_{\xi \xi}$, $R_{\tau \xi}$,
$R_{\tau \tau}$ to zero at infinity as $\xi \to \infty$. The decay holds for
the local solution of Corollary \ref{corollary-existence-Maxwell} on the short time
interval $[0, \epsilon T]$, since the assumptions of Theorem  \ref{theorem-main} corresponds
to $s = 2 > \frac{3}{2}$ in Corollary \ref{corollary-existence-Maxwell}.
Integrating the balance equations (\ref{balance-xi-xi}) and
(\ref{balance-tau-tau}) multiplied by $\epsilon^2$ over $\xi$ in $\R$, we obtain
the extended energy balance equation,
\begin{eqnarray}
\nonumber
& \phantom{t} & \frac{d}{d \tau} \int_{\R}  \left( \epsilon^2 R_{\tau}^2 +
\epsilon^4 R^2_{\tau \tau}
+ R_{\xi \xi}^2 - 2 \epsilon^2 R_{\xi \xi} R_{\xi \tau} \right) d \xi \\
\nonumber & \phantom{t} & + \frac{d}{d \tau} \int_{\R}  \left( - 3 \epsilon^2
A^2 R_{\xi \tau}^2 + 6 \epsilon^2 (A A_{\xi})_{\xi} R_{\tau}^2 - 6 \epsilon^3 A R R_{\xi \tau}^2
- 3 \epsilon^4 R^2 R_{\xi \tau}^2 \right) d \xi \\
\label{balance-H2-norm}
& = & 2 \epsilon \int_{\R} (R_{\xi \xi} A_{\tau \tau \xi} - \epsilon^2 R_{\tau \tau} A_{\tau \tau \tau}) d\xi
+ 2 \int_{\R} \left( I_1 + \epsilon I_2 + \epsilon^2 I_3 \right) d \xi,
\end{eqnarray}
where
\begin{eqnarray*}
I_1 & = & 15 A A_{\xi} R_{\xi \xi}^2 + 18 (A A_{\xi})_{\xi} R_{\xi} R_{\xi \xi} +
6 (A A_{\xi})_{\xi \xi} R R_{\xi \xi} - 3 \epsilon^2 A A_{\tau} R_{\xi \tau}^2
- 6 \epsilon^2 A A_{\xi} R_{\tau \tau} R_{\xi \tau}  \\
\nonumber & \phantom{t} &
- 6 \epsilon^2 A A_{\tau} R_{\tau \tau} R_{\xi \xi}  + 3 \epsilon^2 (A A_{\tau})_{\xi \xi} R_{\tau}^2
- 6 \epsilon^2 (A A_{\tau})_{\xi \xi} R R_{\tau \tau} - 12 \epsilon^2 (A A_{\tau})_{\xi} R_{\xi} R_{\tau \tau}
\end{eqnarray*}
\begin{eqnarray*}
I_2 & = & 3 A_{\xi \xi \xi} R^2 R_{\xi \xi} + 18 A_{\xi \xi} R R_{\xi} R_{\xi \xi}
+ 15 A_{\xi} R R_{\xi \xi}^2 + 18 A_{\xi} R_{\xi}^2 R_{\xi \xi}
+ 15 A R_{\xi} R_{\xi \xi}^2 \\
\nonumber & \phantom{t} &- 3 \epsilon^2 A_{\xi \xi \tau} R^2 R_{\tau \tau}
- 12 \epsilon^2 A_{\xi \tau} R R_{\xi} R_{\tau \tau} - 6 \epsilon^2 A_{\tau} (R R_{\xi})_{\xi} R_{\tau \tau}
-6 \epsilon^2 A_{\xi \xi} R R_{\tau} R_{\tau \tau} - 12 \epsilon^2 A_{\xi} R_{\xi} R_{\tau} R_{\tau \tau} \\
\nonumber & \phantom{t} &
- 6 \epsilon^2 A_{\xi} R R_{\tau \tau} R_{\xi \tau}
- 3 \epsilon^2 A_{\tau} R R_{\xi \tau}^2 - 6 \epsilon^2 A R_{\tau} R_{\tau \tau} R_{\xi \xi}
- 6 \epsilon^2 A R_{\xi} R_{\tau \tau} R_{\xi \tau} - 3 \epsilon^2 A R_{\tau} R_{\xi \tau}^2
\end{eqnarray*}
and
\begin{eqnarray*}
I_3 & = & 15 R R_{\xi} R_{\xi \xi}^2 - 6 \epsilon^2 R_{\xi}^2 R_{\tau} R_{\tau \tau} - 6 \epsilon^2 R R_{\tau} R_{\xi \xi} R_{\tau \tau}
- 6 \epsilon^2 R R_{\xi} R_{\xi \tau} R_{\tau \tau} - 3 \epsilon^2 R R_{\tau} R_{\xi \tau}^2.
\end{eqnarray*}
The energy balance equation (\ref{balance-energy})
follows from (\ref{balance-H1-norm}) and (\ref{balance-H2-norm}).
Recall the assumptions on $A$ in Theorem  \ref{theorem-main}.
Using bounds (\ref{condition-solution}), (\ref{bound-R-xi-tau}), and (\ref{bound-R-tau-infty})
together with the Cauchy-Schwarz inequality, we obtain the bounds (\ref{balance-1}), and (\ref{balance-2}).
The proof of Lemma \ref{lemma-remainder} is complete, as long as the local solution
$R$ remain in the class of functions (\ref{Kato-solution-long}).

\section{Continuation arguments and the proof of Theorem \ref{theorem-main}}

We shall now finish the proof of Theorem \ref{theorem-main}.
Assumption (\ref{condition-solution}) is satisfied for a local
solution of the short-pulse equation (\ref{short-pulse})
according to Corollaries \ref{corollary-first-antider},
\ref{corollary-second-antider}, and \ref{corollary-third-antider}
for any fixed $s > \frac{7}{2}$ and $T > 0$.
Assumptions (\ref{error-initial}) after the decomposition $U = A + \epsilon R$
is rewritten in the form,
\begin{equation}
\| R(0,\cdot) \|_{H^2} + \| R_{\tau}(0,\cdot) \|_{H^1} \leq 1.
\end{equation}
This assumption implies that the initial energy $E |_{\tau= 0} < \infty$ and $E |_{\tau = 0} = \mathcal{O}(1)$ as
$\epsilon \to 0$, where the evolution equation (\ref{remainder}) must be used.
Let us denote $E$ at the time $\tau \geq 0$ by $E(\tau)$.

Since $R(0,\cdot) \in H^3(\R)$ and $R_{\tau}(0,\cdot) \in H^2(\R)$ by
the assumption of Theorem \ref{theorem-main}, Corollary \ref{corollary-existence-Maxwell}
with $s = 2$ implies that there exists a local solution
\begin{equation}
\label{Kato-solution-long-short}
R \in C([0,\epsilon T],H^3(\R)) \cap C^1([0,\epsilon T],H^2(\R)) \cap C^2([0,\epsilon T],H^1(\R))
\end{equation}
of the residual equation (\ref{remainder}). Because of the continuation
criterion (\ref{Kato-criterion-short}) in Corollary \ref{corollary-existence-Maxwell}, we
can extend the existence interval to $[0,T]$ as long as $R$ is controlled in the energy space $E(\tau) < \infty$
for $\tau \in [0,T]$.

By Lemma \ref{lemma-remainder}, we have
\begin{equation}
E(\tau) + \tilde{E}(\tau) = E(0) + \tilde{E}(0) + \int_0^{\tau} J(\tau')  d \tau'.
\end{equation}
We use bounds (\ref{balance-1}) and (\ref{balance-2}), the elementary bound
$2 E^{1/2} \leq 1 + E$, and Gronwall's inequality. As a result,
for a sufficiently small $\delta > 0$,
there is a $\epsilon_0 > 0$ such that for all $\epsilon \in (0,\epsilon_0)$,
there are $C_0 > 0$ and $C_1 > 0$ such that
\begin{equation}
\label{last-equation}
E(\tau) \leq C_0 ( E(0) + \delta T ) e^{C_1 \delta T}, \quad \tau \in [0,T].
\end{equation}
Hence, we have $E(\tau) < \infty$ for any $\tau \in [0,T]$, so that the local solution
(\ref{Kato-solution-long-short}) is extended to the whole time interval $[0,T]$.
Because $E(0)$, $T$, $C_0$, and $C_1$ are $\epsilon$-independent,
the proof of Theorem \ref{theorem-main} is complete.

\end{document}